%---------------------------------------------------------------------%
% Multi-particle dynamical localization
% in a continuous Anderson model
% with an alloy-type potential
% V.Chulaevsky
% A.Boutet de Monvel
% Y.Suhov
% July 2010
%---------------------------------------------------------------------%
\listfiles
%---------------------------------------------------------------------%
\documentclass[a4paper,reqno]{amsart}
%---------------------------------------------------------------------%
\usepackage[text={5.2in,8in},centering]{geometry}
%---------------------------------------------------------------------%
\usepackage{amsrefs}
\usepackage{amssymb}
%---------------------------------------------------------------------%
\usepackage{enumerate}
%---------------------------------------------------------------------%
\usepackage{graphicx}
%---------------------------------------------------------------------%

%---------------------------------------------------------------------%

\numberwithin{equation}{section}
%---------------------------------------------------------------------%
\theoremstyle{plain}
\newtheorem{theorem}{Theorem}[section]

\newtheorem{lemma}[theorem]{Lemma}
%---------------------------------------------------------------------%
\theoremstyle{remark}
\newtheorem{remark}[theorem]{Remark}

\newtheorem*{quest*}{Question}
\newtheorem*{remark*}{Remark}
%---------------------------------------------------------------------%
\theoremstyle{definition}
\newtheorem{definition}[theorem]{Definition}
\newtheorem*{notation*}{Notation}
\newtheorem*{notations*}{Notations}
%---------------------------------------------------------------------%
\providecommand{\B}{\mathbf}
\providecommand{\BS}[1]{\boldsymbol{#1}}
\providecommand{\C}{\mathcal}
\providecommand{\D}{\mathbb}
\providecommand{\F}[1]{\mathfrak{#1}}
\providecommand{\R}{\mathrm}
%---------------------------------------------------------------------%

\newcommand{\eul}{\mathrm{e}}
\newcommand{\ii}{\mathrm{i}}
%---------------------------------------------------------------------%

\providecommand{\prob}[1]{\D{P}\left\{ #1 \right\}}

%---------------------------------------------------------------------%

%---------------------------------------------------------------------%
\DeclareMathOperator{\bad}{bad}

\DeclareMathOperator{\dist}{dist}
\DeclareMathOperator{\diam}{diam}
\DeclareMathOperator{\supp}{supp}
\DeclareMathOperator{\card}{card}

\DeclareMathOperator{\expect}{\mathbb{E}}

\DeclareMathOperator{\good}{good}

\DeclareMathOperator{\intr}{int}

\DeclareMathOperator{\one}{\mathbf{1}}

\DeclareMathOperator{\out}{out}

\DeclareMathOperator{\spec}{spec}
\DeclareMathOperator{\tr}{tr}

%---------------------------------------------------------------------%
\def\xhat{{\B{\hat{x}}}}
\def\xhatna{{\B{\hat{x}}_{n,a}}}
\def\xhatone{{\B{\hat{x}}_{n,1}}}

%---------------------------------------------------------------------%
\begin{document}
%---------------------------------------------------------------------%
\title[Multi-particle dynamical localization in a continuous Anderson model]
{Multi-particle dynamical localization \\ in a continuous Anderson model \\
with an alloy-type potential}
%---------------------------------------------------------------------%
\author[V. Chulaevsky]{Victor Chulaevsky$^1$}
\author[A. Boutet de Monvel]{Anne Boutet de Monvel$^2$}
\author[Y. Suhov]{Yuri Suhov$^3$}
%---------------------------------------------------------------------%
\address{$^1$D\'epartement de Math\'ematiques\\
Universit\'{e} de Reims, Moulin de la Housse, B.P. 1039,\\
51687 Reims Cedex 2, France\\
E-mail: victor.tchoulaevski@univ-reims.fr}
\address{$^2$Institut de Math\'ematiques de Jussieu\\
Universit\'e Paris Diderot\\
175 rue du Chevaleret, 75013 Paris, France\\
E-mail: aboutet@math.jussieu.fr}
\address{$^3$Statistical Laboratory, DPMMS\\
University of Cambridge, Wilberforce Road, \\
Cambidge CB3 0WB, UK\\
E-mail: Y.M.Suhov@statslab.cam.ac.uk}
\date{}
%---------------------------------------------------------------------%
\begin{abstract}
This paper is a complement to our earlier work \cite{BCSS10b}. With the help of the multi-scale analysis, we derive, from estimates obtained in \cite{BCSS10b}, dynamical localization for a multi-particle Anderson model in a Euclidean space $\D{R}^{d}$, $d\geq 1$, with a short-range interaction, subject to a random  alloy-type potential.\\[2mm]
\end{abstract}
%---------------------------------------------------------------------%
\maketitle
%---------------------------------------------------------------------%
%:s.1
%---------------------------------------------------------------------%
\section{Introduction} \label{sec:intro}
%---------------------------------------------------------------------%
%:s.1.1
%---------------------------------------------------------------------%
\subsection{The model}

In this paper we continue our study of a multi-particle Anderson model in $\D{R}^d$ with interaction and in an external random potential of alloy type. The Hamiltonian $\B{H}$ $\left(=\B{H}^{(N)}(\omega)\right)$ is a random Schr\"odinger operator of the form
\begin{equation}\label{eq:def.H}
\B{H}=-\frac{1}{2}\B{\Delta}+\B{U}(\B{x})+\B{V}(\omega;\B{x})
\end{equation}
acting in $L^2(\D{R}^{Nd})$. This means that we consider a system of $N$ interacting quantum particles in $\D{R}^d$. Here $\B{x}=(x_1,\ldots,x_N)\in\D{R}^{Nd}$ is for the joint position vector, where each component $x_j\in\D{R}^d$ represents the position of the $j$th particle, $1\leq j\leq N$. Next, $\B{\Delta}$ stands for the Laplacian in $\D{R}^{Nd}$. The interaction energy operator $\B{U}(\B{x})$ acts as multiplication by a function $U(\B{x})$. Finally, the term $\B{V}(\omega;\B{x})$
represents the operator of multiplication by a function
\begin{equation}\label{eq:def.V}
\B{x}\mapsto V(x_1;\omega)+ \cdots + V(x_N;\omega),
\end{equation}
where $x\in\D{R}^d\mapsto V(x;\omega)$ is a random external field potential assumed to be of the form
\begin{equation}\label{eq:defalloy}
V(x;\omega)=\sum_{s\in\D{Z}^d}\R{V}_s(\omega)\, \varphi(x-s).
\end{equation}
Here and below $\R{V}_s$, $s\in\D{Z}^d$, are i.i.d.\ (independent and identically distributed) real random variables on some probability space $(\Omega,\F{B},\D{P})$ and $\varphi\colon\D{R}^d\to\D{R}$ is usually referred to as a ``bump'' function.

%---------------------------------------------------------------------%
%:s.1.2
%---------------------------------------------------------------------%
\subsection{Basic geometric notations}

Throughout this paper, we will fix an integer $N \geq 2$ and work in Euclidean spaces of the form $\D{R}^{ld}\cong\D{R}^d\times\ldots\times\D{R}^d$ ($l$ times) associated with $l$-particle sub-systems where $1\leq l\leq N$. Correspondingly, the notations $\B{x}$, $\B{y}$,\,\dots\ will be used for vectors from $\D{R}^{ld}$, depending on the context. Given a vector $\B{x}\in\D{R}^{ld}$, we will consider ``sub-configurations'' $\B{x}'$ and $\B{x}''$ generated by $\B{x}$ for a given partition of an $l$-particle system into disjoint sub-systems with $l'$ and $l''$ particles, where $l'+l''=l$, $l',l''\geq 1$; the vectors $\B{x}'$ and $\B{x}''$ are identified with points from $\D{R}^{l'd}$ and $\D{R}^{l''d}$, respectively, by re-labelling the particles accordingly.

All Euclidean spaces will be endowed with the max-norm denoted by $|\,\cdot\,|$. We will consider $ld$-dimensional cubes of integer size in $\D{R}^{ld}$ centered at lattice points $\B{u}\in\D{Z}^{ld}\subset\D{R}^{ld}$ and with edges parallel to the co-ordinate axes. The cube of edge length $2L$ centered at $\B{u}$ is denoted by $\BS{\varLambda}_L(\B{u})$; in the max-norm it represents the ball of radius $L$ centered at $\B{u}$:
\begin{equation}\label{eq:BLam}
\BS{\varLambda}_L(\B{u}) =\{\B{x}\in\D{R}^{ld}:\;|\B{x} - \B{u}| < L\}.
\end{equation}
The lattice counterpart for $\BS{\varLambda}_L(\B{u})$ is denoted by $\B{B}_L(\B{u})$:
\[
\B{B}_L(\B{u}) = \BS{\overline{\varLambda}}_L(\B{u}) \cap \D{Z}^{ld};
\quad \B{u}\in\D{Z}^{ld}.
\]
Finally, we consider ``cells'' (cubes of radius $1$) centered at lattice points $\B{u}\in\D{Z}^{ld}$:
\[
\B{C}(\B{u}) = \BS{\varLambda}_1(\B{u})\subset \D{R}^{ld}.
\]
The union of all cells $\B{C}(\B{u})$,
$\B{u}\in\D{Z}^{ld}$, covers the entire Euclidean space
$\D{R}^{ld}$.
For each $i\in\{1,\ldots,l\}$ we introduce the projection $\Pi_i\colon\D{R}^{ld}\to\D{R}^{d}$ defined by
\[
\Pi_i\colon(x_1,\ldots,x_l)\longmapsto x_i,\;\;1\leq i\leq l.
\]

%---------------------------------------------------------------------%
%:s.1.3
%---------------------------------------------------------------------%
\subsection{Interaction potential}

The interaction within the system of particles is represented by the term $\B{U}(\B{x})$ in the expression \eqref{eq:def.H} of the Hamiltonian $\B{H}$. As was said, it is the operator of multiplication by a function $\B{x}\in\D{R}^{ld}\mapsto U(\B{x})\in\D{R}$, $1\leq l\leq N$. A usual assumption is that $U(\B{x})$ (considered for $\B{x}\in\D{R}^{ld}$
with $1\leq l\leq N$) is a sum of $k$-body potentials
\[
U(\B{x})=\sum_{k=1}^l\,\sum_{1\leq i_1<\ldots<i_k\leq l}U^{(k)}(x_{i_1},\ldots,x_{i_k}),\qquad\B{x}=(x_1,\ldots,x_l)\in\D{R}^{ld}.
\]

In this paper we do not assume isotropy, symmetry or translation invariance of this interaction. However, we use the conditions of finite range, nonnegativity and boundedness, as stated below.

Assume a partition of a configuration $\B{x}\in\D{Z}^{ld}$ is given, into complementary sub-configurations $\B{x}_{\C{J}}=(x_j)_{j\in\C{J}}$ and $\B{x}_{\C{J}^{\rm c}}=(x_j)_{j\in\{1,\ldots,l\}\setminus\C{J}}$, where $\varnothing\neq\C{J}\subsetneq\{1, 2, \ldots, l\}$. The \emph{energy of interaction} between $\B{x}_{\C{J}}$ and $\B{x}_{\C{J}^{\rm c}}$ is defined by
\begin{equation}\label{eq:int.1}
U(\B{x}_{\C{J}} \, | \, \B{x}_{\C{J}^{\rm c}}):= U(\B{x}) - U(\B{x}_{\C{J}})- U(\B{x}_{\C{J}^{\rm c}}).
\end{equation}
Next, define
\begin{equation}\label{eq:rho}
\rho(\B{x}_{\C{J}},\B{x}_{\C{J}^{\rm c}}):=\min\;\Big[|x_i - x_j|:\;i\in\C{J},j\in\C{J}^{\rm c}\Big].
\end{equation}
We say that this interaction has \emph{range} $\R{r}_0\in(0,\infty)$ if, for all $l=1,\ldots,N$ and $\B{x}\in\D{R}^{ld}$,
\begin{equation}\label{eq:int.shortr}
\rho(\B{x}_\C{J},\B{x}_{\C{J}^{\rm c}}) >\R{r}_0\implies U(\B{x}_{\C{J}}\, |\,\B{x}_{\C{J}^{\rm c}}) = 0.
\end{equation}
Finally, we say that the interaction is \emph{non-negative} and \emph{bounded} if
\begin{equation}\label{eq:int.bndd}
\inf_{\B{x}\in\D{R}^{ld}}U(\B{x})\geq 0\;\text{ and }\;\sup_{\B{x}\in\D{R}^{ld}}U(\B{x})<+\infty,\quad 1\leq l\leq N.
\end{equation}
The boundedness condition can be relaxed to include hard-core interactions where $U(\B{x})=+\infty$ if $|x_i-x_j|\leq a$, for some given $a\in (0,\R{r}_0)$.

%---------------------------------------------------------------------%
%:s.1.4
%---------------------------------------------------------------------%
\subsection{Assumptions}

Our assumptions on the interaction potential $U$ are borrowed from \cite{BCSS10b}:
\begin{enumerate}[\bf{(E}1)]
\item
$U$ is non-negative, bounded
and has a finite range $\R{r}_0\geq 0$.
\end{enumerate}

Similarly, we use assumptions on the i.i.d.\ random variables
$\R{V}_s$, $s\in\D{Z}^d$, and the bump function $\varphi$
introduced in \cite{BCSS10b}:
\begin{enumerate}[\bf{(E}1)]
\addtocounter{enumi}{1}
\item
There exists a constant $\R{v}\in (0,\infty)$ such that
\begin{equation} \label{eq:external.boundedness}
\prob{ 0 \leq \R{V}_0 \leq \R{v}} = 1
\end{equation}
and
\begin{equation} \label{eq:external.zero}
\forall\;\epsilon > 0\quad
\prob{\R{V}_0 \leq \epsilon} > 0.
\end{equation}
\item
\emph{Uniform H\"older continuity}:\footnote{The H\"older continuity can be relaxed to the $\log$-H\"older continuity.} There exist constants $\R{a},\,\R{b}>0$ such that for all $\epsilon\in[0,1]$, the common distribution function $F$ of the random variables $\R{V}_s$ satisfies
\begin{equation}           \label{eq:external.holder}
\sup_{\R{y}\in\D{R}}\bigl\lbrack F(\R{y}+\epsilon)-F(\R{y})\bigr\rbrack\leq\R{a}\epsilon^{\R{b}}.
\end{equation}
\item
The function $\varphi\colon\D{R}^d\to \D{R}$ is bounded, nonnegative and compactly supported:
\begin{equation}            \label{eq:compact.supp.bumps}
\diam(\supp\varphi)\leq\R{r}_1 < \infty.
\end{equation}
\item
For all $L\geq 1$ and $u\in\D{Z}^d$,
\begin{equation}       \label{eq:covering.condition}
\sum_{s\in\varLambda_L(u)\cap\D{Z}^d}\;\varphi(x-s)\geq\one_{\varLambda_L(u)}(x).
\end{equation}
\end{enumerate}
Here and below, $\one_A$ stands for the indicator function of a set $A$.

Henceforth, we suppose that $d$ and $N$ are fixed, as well as the interaction $\B{U}$ and the structure of the external potential (i.e., the distribution function $F$ and the bump function $\varphi$). All constants emerging in various bounds below are introduced under this assumption.

%---------------------------------------------------------------------%
%:s.1.5
%---------------------------------------------------------------------%
\subsection {Dynamical localization}

The main result of this paper, Theorem~\ref{thm:main}, establishes the so-called ``strong dynamical localization'' for the operator $\B{H}(\omega)$ defined in \eqref{eq:def.H} near the lower edge $E^0$ of its spectrum. More precisely, let $E^0$ be the lower edge of the spectrum $\spec(\B{H}^0)$ of the $N$-particle operator without interaction,
\begin{equation}\label{eq:opH0}
\B{H}^0=-\frac{1}{2}\B{\Delta}+\sum_{j=1}^N V(x_j;\omega).
\end{equation}
Actually, it follows from our conditions \eqref{eq:external.boundedness} and \eqref{eq:external.zero} that $E^0 = 0$. Owing to the non-negativity of the interaction potential $U$, the lower edge of the spectrum of $\B{H}$ is bounded from below by $E^0$. Moreover, $\B{H}$ has a non-empty spectrum in the interval $[E^0,E^0+\epsilon ]$, for any $\epsilon>0$. This follows, e.g., from a result by Klopp and Zenk \cite{KZ03} which says that the integrated density of states for a multi-particle system with a decaying interaction is the same as for the system without interaction.

Denote by $\B{X}$ the operator of multiplication by the norm of $\B{x}$, i.e.,
\begin{equation}\label{eq:opX}
\B{X}f(\B{x}) = |\B{x}| \, f(\B{x}), \quad \B{x}\in \D{R}^{Nd}.
\end{equation}
The main result of this paper is the following

%-------------------%
%:thm.1.1
%-------------------%
\begin{theorem}                     \label{thm:main}
Consider  the operator $\B{H}$ from \eqref{eq:def.H} and assume that conditions \emph{\textbf{(E1)}--\textbf{(E5)}} are fulfilled. Then for any $Q>0$ there exists a nonrandom number $\eta =\eta(Q)>0$ such that for any compact subset $\B{K}\subset\D{R}^{Nd}$ the following bound holds:
\begin{equation}
\expect\left[ \sup_{t\in\D{R}} \;\left\| \B{X}^Q \, \eul^{-\ii t\B{H}(\omega)}
          P_{I(\eta)}(\B{H}(\omega))
\one_{\B{K}}\right\|_{L^2(\D{R}^{Nd})} \right] < \infty,
\end{equation}
where $P_{I(\eta)}(\B{H})$ is the spectral projection of the Hamiltonian $\B{H}$ on the interval $I(\eta)=[E^0,\,E^0+\eta]$.
\end{theorem}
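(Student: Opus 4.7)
The plan is to follow the standard Germinet--Klein route: convert the probabilistic resolvent estimates from \cite{BCSS10b} into summable bounds on multi-particle eigenfunction correlators, and then feed these into the spectral expansion of $\eul^{-\ii t\B{H}(\omega)}P_{I(\eta)}(\B{H})\one_{\B{K}}$. Concretely, I expect \cite{BCSS10b} to provide, at the output of a variable-energy multi-particle multi-scale analysis on a geometric sequence of scales $L_k$, double-box bounds of the form
\begin{equation*}
\prob{\exists\, E\in I(\eta):\; \BS{\varLambda}_{L_k}(\B{u}) \text{ and } \BS{\varLambda}_{L_k}(\B{v}) \text{ are both } (E,m)\text{-singular}}\leq L_k^{-P(Q)},
\end{equation*}
valid for pairs $\B{u},\B{v}\in\D{Z}^{Nd}$ that are sufficiently separated in the appropriate multi-particle distance, with the exponent $P(Q)$ arbitrarily large provided $\eta=\eta(Q)$ is chosen small enough.

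The execution then proceeds in three steps. First, fix $Q$ and pick $\eta(Q)$ together with the corresponding MSA exponent $P(Q)$, choosing $P(Q)$ large enough to absorb the polynomial weight $\B{X}^Q$ and the $N$-body volume factor at scale $L_k$. Second, for each pair of cells $(\B{C}(\B{u}),\B{C}(\B{v}))$ with $|\B{u}-\B{v}|\geq 4L_k$, convert the double-box estimate above into a bound on the averaged local spectral correlator
\begin{equation*}
\expect\!\left[\sup_{|g|\leq 1}\norm{\one_{\B{C}(\B{v})}\, g(\B{H})\, P_{I(\eta)}(\B{H})\,\one_{\B{C}(\B{u})}}\right],
\end{equation*}
with the supremum taken over Borel functions bounded by $1$. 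The passage from a ``with large probability'' statement to an expectation uses the H\"older continuity of $F$ (assumption \textbf{(E3)}) through a Wegner-type estimate on a slightly thickened energy window, plus the functional calculus inequality $\sup_{|g|\leq 1}\norm{\one_{\B{C}(\B{v})}g(\B{H})P_{I(\eta)}\one_{\B{C}(\B{u})}}\lesssim\sup_{E\in I(\eta)}\norm{\one_{\B{C}(\B{v})}(\B{H}-E)^{-1}\one_{\B{C}(\B{u})}}$ on the high-probability event of good resolvents. Third, expand $\one_{\B{K}}=\sum_{\B{u}}\one_{\B{C}(\B{u})}\one_{\B{K}}$ and dominate $\B{X}^Q\leq \const\cdot\sum_{\B{v}}|\B{v}|^Q \one_{\B{C}(\B{v})}$, then sum the cell-to-cell bounds from Step~2: the super-polynomial decay in $|\B{u}-\B{v}|$ dominates the polynomial weight and the $\B{u}$-summation over the support of $\B{K}$, yielding a finite expectation and hence the theorem.

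The main obstacle, specific to the multi-particle setting, is that the restrictions of the random potential $\B{V}$ to two ``distant'' cubes $\BS{\varLambda}_{L_k}(\B{u})$ and $\BS{\varLambda}_{L_k}(\B{v})$ need not be stochastically independent even when $|\B{u}-\B{v}|$ is large in $\D{R}^{Nd}$: individual projections $\Pi_i(\B{u})$ and $\Pi_j(\B{v})$ in $\D{R}^d$ may still be close, so the same amplitudes $\R{V}_s$ enter both local Hamiltonians through the bumps $\varphi(\,\cdot\,-s)$. This forces the by-now familiar classification of pairs of cubes into \emph{distant}, \emph{partially interactive}, and \emph{fully interactive close} configurations, and is presumably already implemented in \cite{BCSS10b} at the MSA level. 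It enters the dynamical step only through the fact that the summation in Step~3 must be organized along an appropriate symmetrized multi-particle distance on $\D{Z}^{Nd}$ rather than $|\,\cdot\,|$, and that near $E^0=0$ the non-negativity of $U$ combined with assumption \textbf{(E2)} provides the a priori control needed to reduce partially interactive pairs to sub-systems with fewer particles via an induction on $N$. Provided $P(Q)$ also beats the resulting orbit multiplicities, the series converges and the argument closes.
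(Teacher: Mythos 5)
Your proposal follows the eigenfunction-correlator route in the spirit of Germinet--Klein (and of Aizenman--Warzel \cite{AW09b} for multi-particle systems), whereas the paper takes a genuinely different path: it exploits the \emph{a posteriori} pure-point result of Theorem~\ref{thm:DSbound}(ii) to label eigenfunctions by centers of localization, proves a trace bound (Lemma~\ref{lem:3.4}) controlling how many eigenfunctions can have their localization center inside a given cube, and then expands $P_I\,\xi(\B{H})$ directly into eigenprojections, estimating the tail sums using annuli and separability (Lemmas~\ref{lem:3.3} and \ref{lem:3.5}). The two approaches are logically distinct: the paper's argument never needs to average a resolvent over an energy interval, because the discrete sum over eigenvalues $E_n$ replaces that step, and the Wegner input only enters through the already-proved MSA bound \eqref{eq:DS} --- not again at the dynamical stage.

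Within your own route, one step is not correct as written and would need to be substantially reworked. The claimed inequality
\begin{equation*}
\sup_{|g|\leq 1}\bigl\|\one_{\B{C}(\B{v})}\, g(\B{H})\, P_{I(\eta)}\,\one_{\B{C}(\B{u})}\bigr\|\ \lesssim\ \sup_{E\in I(\eta)}\bigl\|\one_{\B{C}(\B{v})}(\B{H}-E)^{-1}\one_{\B{C}(\B{u})}\bigr\|
\end{equation*}
does not hold for the infinite-volume operator $\B{H}$: the left side is controlled by the total variation of the operator-valued spectral measure $\one_{\B{C}(\B{v})}E_{\B{H}}(\cdot)\one_{\B{C}(\B{u})}$ on $I(\eta)$, which is not bounded by resolvent norms at real energies unless one first works in finite volume, integrates over energy (e.g.\ via a Helffer--Sj\"ostrand-type representation or spectral averaging), and then takes a thermodynamic limit. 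This is precisely the technical core of the Germinet--Klein machinery that you invoke but do not carry out; it also typically requires an enhanced MSA output (bootstrap or sub-exponential probability bounds), not just a fixed polynomial exponent $P(Q)$. The paper sidesteps all of this by using the known pure-point structure in $I$. Your final remarks about ``partially interactive'' pairs and ``induction on $N$'' also misplace where those ideas live: they are internal to the MSA of \cite{BCSS10b}, which is used here only through its output \eqref{eq:DS} and the separability criterion of Lemma~\ref{lem:CondGeomSep}; the dynamical reduction itself does not re-run that classification.
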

%-------------------%

%-------------------%
%:rem.1.2
%-------------------%
\begin{remark}
The interval $I(\eta)$ is a sub-interval of the interval of energies $[E^0,E^0+\eta^*]$ for which the spectrum of $\B{H}$ was proven to be pure point (and the eigenfunctions to be decaying exponentially); see~\cite{BCSS10b}.
\end{remark}
%-------------------%

%---------------------------------------------------------------------%
%:s.2
%---------------------------------------------------------------------%
\section{Results of the multi-particle MSA} \label{sec:resMSA}
%---------------------------------------------------------------------%

The MSA works with the finite-volume approximations $\B{H}_{\BS{\Lambda}_L(\B{u})}$ of $\B{H}$, relative to the cubes ${\BS{\Lambda}_L(\B{u})}$. More precisely, $\B{H}_{\BS{\Lambda}_L(\B{u})}$ is an operator in $L^2(\BS{\Lambda}_L(\B{u}))$, given by the same expression as in \eqref{eq:def.H} (for $\B{x}\in\BS{\Lambda}_L(\B{u})$), with Dirichlet's boundary conditions on $\partial\BS{\Lambda}_L(\B{u})$; see~\cite{BCSS10b}. Specifically, the Green operator $\B{G}_{\BS{\varLambda}_L(\B{u})}(E)$ is of particular interest:
\begin{equation}\label{eq:GreenOp}
\B{G}_{\BS{\varLambda}_L(\B{u})}(E)=
(\B{H}_{\BS{\Lambda}_L(\B{u})} - E)^{-1},
\end{equation}
defined for $E\in\D{R}\setminus\spec\left(\B{H}_{\BS{\Lambda}_L(\B{u})}\right)$.

Let $[\;\cdot\;]$ denote the integer part. For a cube $\BS{\varLambda}_L(\B{u})$ we denote
\begin{equation}
\BS{\varLambda}^{\intr}_L(\B{u})=\BS{\varLambda}_{[ L/3]}(\B{u)},\quad\BS{\varLambda}^{\out}_L(\B{u})=\BS{\varLambda}_L(u) \setminus \BS{\varLambda}_{L-2}(u).
\end{equation}
Next, given two points $\B{v},\B{w}\in\B{B}_L(\B{u})$ such that $\B{C}(\B{v}), \B{C}(\B{w})\subset\BS{\varLambda}_L(\B{u})$, set
\begin{equation}\label{eq:G.u.v}
\B{G}_{\B{v}, \B{w}}^{\BS{\varLambda}_L(\B{u})}(E):=\one_{\B{C}(\B{v})}\B{G}_{\BS{\varLambda}_L(\B{u})}(E)\one_{\B{C}(\B{w})}.
\end{equation}
Following a long-standing tradition, we use a parameter $\alpha\in(1,2)$ in the definition of a sequence of scales $L_k$ (cf. Eqn~\eqref{eq:scales}); For our purposes, it suffices to set $\alpha=3/2$; this will be always assumed below.
%-------------------%
%:def.2.1
%-------------------%
\begin{definition}\label{DefNS}
A cube $\BS{\varLambda}_L(\B{u})$ is called \emph{$(E,m)$-non-singular} ($(E,m)$-NS, in short) if for any $\B{v}\in \B{B}_{\left[ L^{1/\alpha}\right]}(\B{u})$ and $\B{y}\in \BS{\varLambda}_L^{\out}(\B{u})\cap\D{Z}^{Nd}$ the norm of the operator $\B{G}_{\B{v},\B{y}}^{\BS{\varLambda}_L(\B{u})}(E)$ satisfies
\begin{equation}\label{eq:cubeNS}
\left\|\B{G}_{\B{v},\B{y}}^{\BS{\varLambda}_L(\B{u})}(E)\right\|_{L^2(\BS{\Lambda}_L(\B{u}))}\leq\eul^{-mL}.
\end{equation}
Otherwise, it is called \emph{$(E,m)$-singular} ($(E,m)$-S).
\end{definition}
%-------------------%

We will work with a sequence of ``scales'' $L_k$ (positive integers) defined  recursively by
\begin{equation}\label{eq:scales}
L_k:=\left[L_{k-1}^\alpha\right] + 1,\;\text{ where }\;\alpha =\frac{3}{2}\,.
\end{equation}
The sequence $L_k$ is determined by an initial scale $L_0\geq 2$. Most of arguments in Sect.~\ref{sec:reduction} require $L_0$ to be large enough, to fulfill some specific numerical inequalities. In addition, we also assume that $L_0\geq\R{r}_1$ (defined in \eqref{eq:compact.supp.bumps}) in order to simplify some cumbersome technicalities.

We will use a well-known property of generalized eigenfunctions of the operator $\B{H}$ which can be found, e.g., in \cite[Lemma 3.3.2]{St01}:

%-------------------%
%:lem.2.2
%-------------------%
\begin{lemma}\label{lem:EDI}
For every bounded set $I_0\subset\D{R}$ there exists a constant $C^{(0)}=C^{(0)}(I_0)$ such that, for any cube $\BS{\Lambda}_L(\B{u})$ with $L>7$, any point $\B{w}\in\B{B}_L(\B{u})$ with $\B{C}(\B{w})\subseteq\BS{\Lambda}^{\intr}_L(\B{u})$ and every generalized eigenfunction $\varPsi$ of $\B{H}$ with eigenvalue $E\in I_0$, the norm of the vector $\one_{\B{C}(\B{w})}\BS{\varPsi}$ satisfies
\begin{equation}\label{eq:EDI}
\|\one_{\B{C}(\B{w})} \BS{\varPsi}\|
\leq C^{(0)}\,\| \one_{\BS{\Lambda}^{\out}_L(u)}
\B{G}_{\BS{\varLambda}_L(\B{u})}(E) \, \one_{\B{C}(\B{w})} \|
\cdot \|\one_{\BS{\Lambda}^{\out}_L(\B{u})}  \BS{\varPsi} \|.
\end{equation}
\end{lemma}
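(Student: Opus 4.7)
The statement is the standard continuous-space Eigenfunction Decay Inequality (EDI), proven via the Geometric Resolvent Identity applied to a smooth cutoff, combined with a Caccioppoli-type local gradient bound. I would proceed as follows.

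First, I would choose a smooth cutoff $\chi\in C^\infty_c(\D{R}^{Nd})$ with $\chi\equiv 1$ on $\BS{\varLambda}_{L-2}(\B{u})$, $\supp\chi\subset\BS{\varLambda}_{L-1}(\B{u})$, and $\|\nabla\chi\|_\infty+\|\Delta\chi\|_\infty\leq C_1$ for a universal constant $C_1$; this is possible since $L>7$ leaves ample room. By construction $\chi\equiv 1$ on $\B{C}(\B{w})$ (as $\B{C}(\B{w})\subset\BS{\varLambda}^{\intr}_L(\B{u})$), and both $\nabla\chi$ and $\Delta\chi$ are supported strictly inside the shell $\BS{\varLambda}^{\out}_L(\B{u})$.

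Second, I would apply the eigenvalue equation distributionally. Since the coefficients $U+V$ of $\B{H}$ are essentially bounded, uniformly in $\omega$, by \textbf{(E1)}, \textbf{(E2)} and \textbf{(E4)}, standard interior elliptic regularity forces the generalized eigenfunction $\BS{\varPsi}$ to lie in $H^2_{\loc}(\D{R}^{Nd})$, so $\chi\BS{\varPsi}$ belongs to the Dirichlet domain of $\B{H}_{\BS{\varLambda}_L(\B{u})}$. A direct Leibniz computation yields
\[
(\B{H}_{\BS{\varLambda}_L(\B{u})}-E)(\chi\BS{\varPsi})=[\B{H},\chi]\BS{\varPsi}=-\tfrac{1}{2}(\Delta\chi)\BS{\varPsi}-(\nabla\chi)\cdot(\nabla\BS{\varPsi}).
\]
Assuming $E\notin\spec(\B{H}_{\BS{\varLambda}_L(\B{u})})$ (otherwise the right-hand side of \eqref{eq:EDI} is infinite and the conclusion is vacuous), I invert and multiply by $\one_{\B{C}(\B{w})}$ on the left, obtaining, since $\supp[\B{H},\chi]\subset\BS{\varLambda}^{\out}_L(\B{u})$,
\[
\one_{\B{C}(\B{w})}\BS{\varPsi}=\one_{\B{C}(\B{w})}\B{G}_{\BS{\varLambda}_L(\B{u})}(E)\one_{\BS{\varLambda}^{\out}_L(\B{u})}[\B{H},\chi]\BS{\varPsi}.
\]

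Third, taking $L^2$ norms and exploiting self-adjointness of $\B{G}_{\BS{\varLambda}_L(\B{u})}(E)$ for real $E$ to switch $\|\one_{\B{C}(\B{w})}\B{G}\one_{\BS{\varLambda}^{\out}_L}\|=\|\one_{\BS{\varLambda}^{\out}_L}\B{G}\one_{\B{C}(\B{w})}\|$, the whole matter reduces to proving an estimate of the form
\[
\|\one_{\BS{\varLambda}^{\out}_L(\B{u})}[\B{H},\chi]\BS{\varPsi}\|\leq C^{(0)}\,\|\one_{\BS{\varLambda}^{\out}_L(\B{u})}\BS{\varPsi}\|.
\]
The zeroth-order piece $(\Delta\chi)\BS{\varPsi}$ is trivial. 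The key step, and the main obstacle, is the first-order term $(\nabla\chi)\cdot(\nabla\BS{\varPsi})$: one must bound $\|\nabla\BS{\varPsi}\|_{L^2(\supp\nabla\chi)}$ by $\|\BS{\varPsi}\|_{L^2(\BS{\varLambda}^{\out}_L(\B{u}))}$. I would handle this via the Caccioppoli inequality applied to the elliptic equation $-\tfrac{1}{2}\Delta\BS{\varPsi}=(E-U-V)\BS{\varPsi}$, with an auxiliary cutoff equal to $1$ on $\supp\nabla\chi$ and supported strictly inside the open shell (the unit width of $\BS{\varLambda}^{\out}_L(\B{u})$ leaves the needed margin). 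The resulting constant depends only on $\sup_{E\in I_0}|E|$ and on the deterministic $L^\infty$ bounds on $U$ and $V$, hence only on $I_0$, yielding the asserted $C^{(0)}=C^{(0)}(I_0)$.
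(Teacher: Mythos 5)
The paper does not prove this lemma; it is quoted from Stollmann \cite[Lemma~3.3.2]{St01}. Your approach — commutator/geometric resolvent identity applied to a smooth cutoff, reduced to a Caccioppoli estimate for the gradient term, transposing via self-adjointness to obtain the stated form — is exactly the standard route taken there, so in spirit the proposal matches the source.

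There is, however, a concrete parameter mistake in the cutoff placement that would break the Caccioppoli step as written. You take $\chi\equiv 1$ on $\BS{\varLambda}_{L-2}(\B{u})$ and $\supp\chi\subset\BS{\varLambda}_{L-1}(\B{u})$, so $\supp\nabla\chi\subset\overline{\BS{\varLambda}_{L-1}(\B{u})}\setminus\BS{\varLambda}_{L-2}(\B{u})$. The inner edge of this set coincides with the inner boundary of $\BS{\varLambda}^{\out}_L(\B{u})=\BS{\varLambda}_L(\B{u})\setminus\BS{\varLambda}_{L-2}(\B{u})$. Consequently there is no room on the inner side for the auxiliary Caccioppoli cutoff $\theta$: you need $\theta\equiv 1$ on $\supp\nabla\chi$ while $\theta$ vanishes on $\BS{\varLambda}_{L-2}(\B{u})$, which forces a discontinuity at $\abs{\B{x}-\B{u}}=L-2$. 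If instead you let $\theta$ extend into $\BS{\varLambda}_{L-2}(\B{u})$, the Caccioppoli right-hand side picks up $\|\BS{\varPsi}\|_{L^2(\supp\theta)}$ over a set that is not contained in $\BS{\varLambda}^{\out}_L(\B{u})$, so you do not recover the stated inequality. Also, the parenthetical ``unit width of $\BS{\varLambda}^{\out}_L$'' is off: that shell has radial width $2$, and in any case the margin you need must be on both sides of $\supp\nabla\chi$.

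The fix is easy and does not change the strategy: pull the cutoff strictly into the interior of the shell, e.g.\ $\chi\equiv 1$ on $\BS{\varLambda}_{L-5/3}(\B{u})$ and $\supp\chi\subset\BS{\varLambda}_{L-4/3}(\B{u})$. Then $\supp\nabla\chi$ is separated by distance $\geq 1/3$ from both boundaries of $\BS{\varLambda}^{\out}_L(\B{u})$, so an auxiliary cutoff $\theta$ with $\theta\equiv 1$ on $\supp\nabla\chi$, $\supp\theta\subset\BS{\varLambda}^{\out}_L(\B{u})$, and $\|\nabla\theta\|_\infty\leq C$ exists, and the Caccioppoli inequality closes the argument with a constant depending only on $\sup_{E\in I_0}\abs{E}$ and the uniform $L^\infty$ bounds on $U$ and $V$. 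With this correction the proof is complete and agrees with the cited reference.
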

%-------------------%

\noindent(From now on we omit the subscript indicating the
$L^2$-space where a given norm is considered, as this will be
clear in the context of the argument.)

The following geometric notion is used in the forthcoming analysis.

%-------------------%
%:def.2.3
%-------------------%
\begin{definition} (see~\cite{BCSS10b}).
Let $\C{J}$ be a non-empty subset of $\{1,\ldots,N\}$.

We say that the cube $\BS{\varLambda}_L(\B{y})$ is \emph{$\C{J}$-separable} from the cube $\BS{\varLambda}_L(\B{x})$ if
\begin{equation}\label{eq:sep}
\Biggl(\,\bigcup_{j\in \C{J}} \Pi_j \BS{\varLambda}_{L+ \R{r}_1}(\B{y})
\Biggr)\bigcap\Biggl(\,\bigcup_{i\not\in\C{J}} \Pi_i \BS{\varLambda}_{L+ \R{r}_1 }(\B{y}) \;\bigcup \Pi \BS{\varLambda}_{L+ \R{r}_1}(\B{x})\Biggr) = \varnothing
\end{equation}
where $\Pi\BS{\varLambda}_{L+\R{r}_1}(\B{x})=\cup_{j=1}^N\Pi_j\BS{\varLambda}_{L+\R{r}_1}(\B{x})$.

A pair of cubes $\BS{\varLambda}_L(\B{x})$, $\BS{\varLambda}_L(\B{y})$ is \emph{separable}  if, for some $\C{J}\subseteq\{1,\ldots,N\}$, either $\BS{\varLambda}_L(\B{y})$ is $\C{J}$-separable from $\BS{\varLambda}_L(\B{x})$, or $\BS{\varLambda}_L(\B{x})$ is $\C{J}$-separable from $\BS{\varLambda}_L(\B{y})$.
\end{definition}
%-------------------%

We will use the following easy assertion (see~\cite{BCSS10b}):

%-------------------%
%:lem.2.4
%-------------------%
\begin{lemma}       \label{lem:CondGeomSep}
For any $L>1$ and $\B{x}\in\D{R}^{Nd}$, there exists a collection of $N$-particle cubes $\BS{\varLambda}_{2N(L+\R{r}_1)}(\B{x}^{(l)})$, $l=1,\ldots,K(\B{x},N)$, with $K(\B{x},N)\leq N^N$, such that if a vector $\B{y}\in\D{Z}^{Nd}$ satisfies\,\footnote{\,The constant $\R{r}_1$ is defined in \eqref{eq:compact.supp.bumps}.}
\begin{equation}  \label{eq:CondGeomSep}
\B{y}\notin\bigcup_{\ell=1}^{K(\B{x},N)}\BS{\varLambda}_{2N(L+\R{r}_1)}
(\B{x}^{(l)}),
\end{equation}
then two cubes $\BS{\varLambda}_L(\B{x})$ and $\BS{\varLambda}_L(\B{y})$ with $\dist\left(\BS{\varLambda}_L(\B{x}),\BS{\varLambda}_L(\B{y})\right)>2N(L+\R{r}_1)$ are separable. In particular, assuming $L\geq \R{r}_1$, a pair of cubes $\BS{\varLambda}_L(\B{x})$, $\BS{\varLambda}_L(\B{y})$ is separable if
\begin{equation}\label{eq:cond.sep.4NL}
|\B{y}| > |\B{x}| + (4N+2)L.
\end{equation}
\end{lemma}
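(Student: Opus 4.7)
The plan is to construct the collection of cubes by enumerating ``nearness patterns'' between the particles of $\B{y}$ and those of $\B{x}$. For each function $f\colon\{1,\ldots,N\}\to\{1,\ldots,N\}$ (there are $N^N$ of them), set
\[
\B{x}^{(f)} = (x_{f(1)},\ldots,x_{f(N)})\in\D{R}^{Nd},
\]
and take as candidates the $K(\B{x},N)\leq N^N$ cubes $\BS{\varLambda}_{2N(L+\R{r}_1)}(\B{x}^{(f)})$. The key assertion to prove is that if $\B{y}$ lies outside all of these cubes, then $\BS{\varLambda}_L(\B{y})$ is $\C{J}$-separable from $\BS{\varLambda}_L(\B{x})$ for some non-empty $\C{J}\subseteq\{1,\ldots,N\}$, which already gives separability of the pair.

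To establish the key assertion I would argue by contrapositive via a cluster-and-chain argument. Consider the graph on $\{1,\ldots,N\}$ with an edge between $j$ and $i$ whenever $|y_j-y_i|<2(L+\R{r}_1)$, and let $C_1,\ldots,C_r$ be its connected components. Inspection of \eqref{eq:sep} shows that any $\C{J}$-separable set $\C{J}$ must be a union of such components (particles inside $\C{J}$ are forced to be far from particles outside), and that such a union is $\C{J}$-separable exactly when no $y_j$ with $j\in\C{J}$ lies within $2(L+\R{r}_1)$ of any $x_k$. Thus, if no separable $\C{J}$ exists, every cluster $C_s$ must contain some $y_{j^*_s}$ with $|y_{j^*_s}-x_{k^*_s}|<2(L+\R{r}_1)$ for an appropriate index $k^*_s$. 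Since any $j\in C_s$ can be reached from $j^*_s$ by a path of at most $|C_s|-1\leq N-1$ edges of length $<2(L+\R{r}_1)$, the max-norm triangle inequality gives $|y_j-x_{k^*_s}|<2N(L+\R{r}_1)$. Defining $f(j):=k^*_s$ for $j\in C_s$ produces a function with $|\B{y}-\B{x}^{(f)}|<2N(L+\R{r}_1)$, contradicting the hypothesis that $\B{y}\notin\BS{\varLambda}_{2N(L+\R{r}_1)}(\B{x}^{(f)})$.

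For the final assertion (the clean case $|\B{y}|>|\B{x}|+(4N+2)L$ with $L\geq\R{r}_1$) I would deduce it from the first part by verifying both of its hypotheses. Because $|\B{x}^{(f)}|=\max_j|x_{f(j)}|\leq|\B{x}|$ for every $f$, one obtains $|\B{y}-\B{x}^{(f)}|\geq|\B{y}|-|\B{x}|>(4N+2)L$, and the elementary inequality $(4N+2)L\geq 2N(L+\R{r}_1)$, equivalent to $(N+1)L\geq N\R{r}_1$ and hence a consequence of $L\geq\R{r}_1$, gives cube avoidance. Likewise $\dist(\BS{\varLambda}_L(\B{x}),\BS{\varLambda}_L(\B{y}))\geq|\B{x}-\B{y}|-2L>4NL\geq 2N(L+\R{r}_1)$, so the distance hypothesis is met as well.

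The main obstacle will be setting up the graph-theoretic dictionary cleanly: one needs the precise equivalence between the $\C{J}$-separable subsets and the unions of $\B{y}$-clusters that are ``$\B{x}$-free''. Once that equivalence is in place, the combinatorial chain bound $|C_s|-1\leq N-1$ is exactly what inflates the elementary nearness scale $2(L+\R{r}_1)$ to the $2N(L+\R{r}_1)$ appearing in the cubes of the statement.
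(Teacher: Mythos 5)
Your proof is correct, and it follows the standard cluster-decomposition argument for separability lemmas in the Chulaevsky--Suhov multi-particle literature (the paper itself defers the proof to \cite{BCSS10b} rather than giving one). As a minor observation, your argument in fact shows that cube-avoidance alone already forces the existence of an $\B{x}$-free cluster and hence a $\C{J}$-separable $\C{J}$, so the distance hypothesis $\dist\left(\BS{\varLambda}_L(\B{x}),\BS{\varLambda}_L(\B{y})\right)>2N(L+\R{r}_1)$ in the statement is redundant for the conclusion; you correctly verify it anyway when deriving the final assertion from \eqref{eq:cond.sep.4NL}, using $|\B{x}^{(f)}|\leq|\B{x}|$ and $L\geq\R{r}_1$.
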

%-------------------%

Since $N\geq 2$, one can replace the condition \eqref{eq:cond.sep.4NL} by
\begin{equation}\label{eq:cond.sep.5NL}
|\B{y}| > |\B{x}| + 5NL.
\end{equation}
In particular, two cubes of the form $\BS{\varLambda}_L(\B{0})$, $\BS{\varLambda}_L(\B{y})$ with $|\B{y}|> 5NL$ are always separable.

The main outcome of \cite{BCSS10b} is summarized in the following Theorem~\ref{thm:DSbound}:

%-------------------%
%:thm.2.5
%-------------------%
\begin{theorem}[see~\cite{BCSS10b}]                 \label{thm:DSbound}
For any large enough $p>0$ there exist $m^*(p)>0$, $\eta^*(p)>0$, and $L^*_0(p)>0$
such that
\begin{enumerate}[\rm(i)]
\item
if $L_0\geq L_0^*(p)$ then for all $k\geq 0$ and for any pair of separable cubes
$\BS{\varLambda}_{L_k}(\B{x})$, $\BS{\varLambda}_{L_k}(\B{y})$ with $\B{x},\B{y}\in\D{Z}^{Nd}$,
\begin{equation}\label{eq:DS}
\prob{ \exists\, E\in[E^0, E^0+\eta^*]:\,
\text{$\BS{\varLambda}_{L_k}(\B{x})$ and $\BS{\varLambda}_{L_k}(\B{y})$
are $(E,m)$-{\rm S}} } \leq L_k^{-2p},
\end{equation}
\item
with probability one, the spectrum of $\B{H}$ in the interval $I=[E^0,E^0+\eta^*(p)]$ is pure point, and the eigenfunctions $\BS{\varPhi}_n$ of $\B{H}$ with eigenvalues $E_n\in I$ satisfy
\begin{equation}\label{eq:ExpLoc}
\|\BS{\varPhi}_n\one_{\B{C}(\B{w})}\|\leq
C_n(\omega)\eul^{-m^*(p)|\B{w}|},\quad\B{w}\in\D{Z}^{Nd},\quad C_n(\omega) < \infty.
\end{equation}
\end{enumerate}
\end{theorem}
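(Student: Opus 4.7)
The plan is to prove Theorem~\ref{thm:DSbound} by running the multi-particle multi-scale analysis (MPMSA) along the sequence of scales $L_k$ defined in \eqref{eq:scales}. The output \eqref{eq:DS} is a two-volume (Germinet--Klein style) probabilistic estimate that one proves inductively, using a separate initial-scale estimate at $L_0$ and a scale-induction step $L_k \leadsto L_{k+1}$. The pure point spectrum with exponentially decaying eigenfunctions in (ii) is then deduced from (i) by a standard Borel--Cantelli plus spectral-reduction argument, combined with the eigenfunction decay inequality of Lemma~\ref{lem:EDI}.

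For the initial scale, one exploits the edge-of-spectrum localization. Because $V\geq 0$ by \textbf{(E2)}, the Hamiltonian $\B{H}$ lies above $E^0=0$, and the non-interacting operator $\B{H}^0$ has an $N$-fold product structure for which Lifshitz-tail / first-moment arguments (using \eqref{eq:covering.condition} and \eqref{eq:external.zero}) force the probability of having any eigenvalue of $\B{H}_{\BS{\varLambda}_{L_0}}$ in $[E^0,E^0+\eta^*]$ to be super-polynomially small in $L_0$. Together with Combes--Thomas decay of the Green operator at energies in a small neighborhood of $E^0$ outside $\spec(\B{H}_{\BS{\varLambda}_{L_0}})$, this yields $\prob{\BS{\varLambda}_{L_0}(\B{u})\ \text{is $(E,m)$-S for some } E\in[E^0,E^0+\eta^*]}\leq L_0^{-2p}$, provided $L_0$ is large enough and $\eta^*$ small enough.

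The inductive step is the core of the argument. Assuming the $L_k$-bound, partition a cube $\BS{\varLambda}_{L_{k+1}}(\B{u})$ into smaller cubes of radius $L_k$ and classify each according to the geometric structure of its particle projections: \emph{fully interactive} (all coordinates lie within range $\R{r}_0$ of each other) versus \emph{decomposable} into non-interacting sub-systems. For a decomposable cube, the Hamiltonian splits into a tensor sum whose factors act on fewer particles, and one runs induction on $N$. For a \emph{separable} pair of fully interactive cubes, the condition \eqref{eq:sep} guarantees that the sets of random variables $\{\R{V}_s\}$ entering the two cubes are \emph{disjoint} (thanks to the compact support of $\varphi$ from \textbf{(E4)}), hence independent. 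One then couples a deterministic resolvent-expansion step — iterating Lemma~\ref{lem:EDI} to propagate singularity out of the central sub-cube — with a multi-particle Wegner estimate controlling the probability that either cube contains a resonance, the latter using the Hölder condition \eqref{eq:external.holder} and the covering assumption \eqref{eq:covering.condition}.

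The main technical obstacle is precisely the multi-particle Wegner bound for fully interactive, separable pairs: in the multi-particle setting, shifting one particle only changes a few random variables, so the standard single-particle eigenvalue-concentration argument must be refined. The remedy is to pick, via Lemma~\ref{lem:CondGeomSep}, a $\C{J}$ for which the projected sites $\Pi_j\BS{\varLambda}_{L+\R{r}_1}$, $j\in\C{J}$, are disjoint from the remaining projections and from those of the partner cube; then, conditioning on all other random variables, the sum $\sum_{j\in\C{J}} V(x_j;\omega)$ behaves like a genuine one-parameter random perturbation whose spectral shift one controls via \eqref{eq:external.holder}. Once \eqref{eq:DS} is propagated to all scales, part (ii) follows: for fixed $\omega$, the bound (i) via Borel--Cantelli kills the probability that two well-separated cubes are simultaneously $(E_n,m)$-singular at infinitely many scales, so any generalized eigenfunction whose energy lies in $I$ must be $(E_n,m)$-NS in all sufficiently large cubes around points far from the origin, and iterating \eqref{eq:EDI} gives the exponential bound \eqref{eq:ExpLoc}.
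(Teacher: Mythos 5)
This theorem is not proved in the present paper at all: it is cited verbatim from the companion work \cite{BCSS10b} (with the Wegner-type ingredient drawn from \cite{BCSS10a}), so there is no ``paper's own proof'' to compare against. The paper's role is to take \eqref{eq:DS} and \eqref{eq:ExpLoc} as given inputs and derive dynamical localization from them in Section~\ref{sec:reduction}.

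That said, your sketch is a faithful roadmap of the MPMSA as developed in \cite{BCSS10b} and its predecessors (\cite{CS09a}, \cite{CS09b}). You correctly identify the main moving parts: a Lifshitz-tail / Combes--Thomas initial-scale estimate near the spectral edge $E^0=0$ using \textbf{(E2)} and the covering condition \textbf{(E5)}; the scale induction via a dichotomy between fully interactive cubes (all particles within range $\R{r}_0$) and decomposable ones (which reduce, by a tensor-product structure, to fewer-particle systems handled by induction on $N$); the use of separability and the compact support of $\varphi$ (\textbf{(E4)}) to guarantee disjointness, hence independence, of the relevant random variables; the refined multi-particle Wegner bound, which rests on conditioning out of the variables not affected by moving a distinguished group $\C{J}$ of particles and then applying \eqref{eq:external.holder} to the remaining one-parameter shift; and finally the Borel--Cantelli plus \eqref{eq:EDI} derivation of part~(ii) from part~(i). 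What your sketch necessarily elides --- and what constitutes the bulk of \cite{BCSS10b} --- is the quantitative bookkeeping: the precise dependence of $m^*(p)$, $\eta^*(p)$, $L_0^*(p)$ on $p$; the geometric resolvent inequality iterated inside a box to pass from $L_k$ to $L_{k+1}$; the combinatorics of how many singular sub-cubes a good box may contain; and the careful separation between the two-volume probabilistic bound \eqref{eq:DS} (for pairs of boxes) and the Wegner bound (for a single box or a fully interactive pair). So as a blind reconstruction of the architecture you have the right picture, but in the context of this paper the theorem is an external input, not an internal claim requiring proof.
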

%-------------------%

%---------------------------------------------------------------------%
%:s.3
%---------------------------------------------------------------------%
\section{Derivation of dynamical localization from MSA estimates}
\label{sec:reduction}
%---------------------------------------------------------------------%

In this section we prove a statement that is slightly more general than Theorem~\ref{thm:main}. Namely, given $Q>0$, the interval $I = I(\eta)=[E^0,E^0+\eta]$ with $\eta=\eta(Q)$, and a compact subset $\B{K}\subset\D{R}^d$, there exists a constant $C(Q,\B{K})\in (0,\infty)$ such that for any bounded measurable function $\xi\colon\D{R}\to\D{C}$ with $\supp\xi\subset I(\eta)$,
\begin{equation}\label{eq:DL2}
\expect \left[  \| \B{X}^Q \, \xi(\B{H}(\omega))\one_{\B{K}} \| \right]
< C(Q,\B{K})\,\|\xi\|_\infty <\infty.
\end{equation}
Moreover, $Q>0$ can be made arbitrarily large, by choosing $\eta =\eta (Q)>0$ sufficiently small. Theorem~\ref{thm:main} follows from \eqref{eq:DL2} applied to the functions $\xi(s)= \eul^{-\ii ts}\one_{I(\eta)}(s)$, parametrised by $t\in\D{R}$.

Throughout the section, we assume that the parameter $p$ from \eqref{eq:DS} satisfies
\begin{equation}\label{eq:DD.2}
2p > 3Nd\alpha + \alpha Q.
\end{equation}
More precisely, given $Q>0$ and $p$ satisfying \eqref{eq:DD.2}, we work with
\begin{equation}
\eta=\eta (Q)\in (0,\eta^* (p))\;\text{ and }\;m=m^*(p)>0,
\end{equation}
where $\eta^*(p)$ and $m^*(p)$ are specified in Theorem~\ref{thm:DSbound}. Further, for $p$ satisfying \eqref{eq:DD.2} we introduce the event $\Omega_1=\Omega_1(p)\subseteq\Omega$ of probability $\D{P}(\Omega_1)=1$, defined by
\begin{equation}
\Omega_1=\bigl\{\omega\in\Omega:\;\text{the spectrum of $\B{H}(\omega)$
 in $[E^0,E^0+\eta^*(p)]$ is pure point}\bigr\}.
\end{equation}

%---------------------------------------------------------------------%
%:s.3.1
%---------------------------------------------------------------------%
\subsection{Probability of ``bad samples''}\label{ssec:step.1}

Given $j\geq 1$, consider the event
\begin{align*}
\C{S}_j
&=\{ \omega:\,\text{there exists }E\in I\,\text{ and }\, \B{y},\B{z}\in
\B{B}_{5NL_{j+1 }}(\B{0}) \text{ such that}\\
&\qquad\qquad\qquad\quad\BS{\Lambda}_{L_{j}}(\B{y}), \BS{\Lambda}_{L_{j}}(\B{z})
\text{ are  separable and $(m,E)$-S}\}.
\end{align*}
Further, for $k\geq 1$ we denote
\begin{equation}\label{eq:Ombad}
\Omega_k^{\bad} = \bigcup_{j\geq k} \C{S}_j.
\end{equation}

%-------------------%
%:lem.3.1
%-------------------%
\begin{lemma}
There exists a constant $c_1\in(0,\infty)$ such that for all $k\geq 1$,
\[
\prob{\Omega_k^{\bad}}\leq c_1L_k^{-(2p-2Nd\alpha)}.
\]
\end{lemma}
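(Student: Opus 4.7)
The plan is to control $\prob{\Omega_k^{\bad}}$ by a union bound over ordered pairs of centers at each scale $L_j$, apply Theorem~\ref{thm:DSbound}(i) to every separable pair, and then sum the resulting tail using the super-linear growth $L_{j+1}\ge L_j^{\alpha}$ with $\alpha=3/2$.

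Fix $j\ge 1$. The cardinality of $\B{B}_{5NL_{j+1}}(\B{0})\cap\D{Z}^{Nd}$ is at most $(10NL_{j+1}+1)^{Nd}\le c_0 L_{j+1}^{Nd}$ for a constant $c_0=c_0(N,d)$, so the number of ordered pairs $(\B{y},\B{z})$ of centers in that ball is at most $c_0^2 L_{j+1}^{2Nd}$. For each such pair that is separable, Theorem~\ref{thm:DSbound}(i) provides
\begin{equation*}
\prob{\exists\,E\in I:\;\BS{\Lambda}_{L_j}(\B{y}),\,\BS{\Lambda}_{L_j}(\B{z})\text{ are }(E,m)\text{-S}}\le L_j^{-2p},
\end{equation*}
since $I=I(\eta)\subseteq[E^0,E^0+\eta^*(p)]$. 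Combining these two bounds, and using $L_{j+1}\le L_j^\alpha+1\le 2L_j^\alpha$, we obtain
\begin{equation*}
\prob{\C{S}_j}\le c_0^2\,L_{j+1}^{2Nd}\,L_j^{-2p}\le c_2\,L_j^{2Nd\alpha-2p}.
\end{equation*}

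To bound $\prob{\Omega_k^{\bad}}$, I sum over $j\ge k$. The assumption \eqref{eq:DD.2} ensures $2p-2Nd\alpha>0$, and the iterative relation $L_{j+1}\ge L_j^\alpha$ controls the ratios via
\[
\frac{L_{j+1}^{-(2p-2Nd\alpha)}}{L_j^{-(2p-2Nd\alpha)}}\le L_j^{-(\alpha-1)(2p-2Nd\alpha)}\le L_0^{-(\alpha-1)(2p-2Nd\alpha)},
\]
which is at most $1/2$ provided $L_0$ is taken large enough, as the paper already standingly assumes. Consequently $\sum_{j\ge k}L_j^{-(2p-2Nd\alpha)}\le 2 L_k^{-(2p-2Nd\alpha)}$, and setting $c_1\coloneq 2c_2$ gives the claimed bound $\prob{\Omega_k^{\bad}}\le c_1 L_k^{-(2p-2Nd\alpha)}$.

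The only real bookkeeping hurdle is controlling the constant $c_0$ in the lattice count and verifying that the tail is dominated by its first term; neither is deep. No additional geometric argument (for instance, iterating separability or invoking Lemma~\ref{lem:CondGeomSep}) is needed at this stage because the event $\C{S}_j$ explicitly restricts attention to separable pairs, so Theorem~\ref{thm:DSbound}(i) applies directly to each summand.
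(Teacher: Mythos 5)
Your proposal is correct and follows essentially the same route as the paper: a union bound over pairs of centers in $\B{B}_{5NL_{j+1}}(\B{0})$, the two-volume MSA estimate \eqref{eq:DS} on each separable pair, and then summation of the tail $\sum_{j\ge k}L_j^{-(2p-2Nd\alpha)}$ using the super-linear scale growth $L_{j+1}\ge L_j^{\alpha}$. The only cosmetic difference is that the paper bounds the tail via $L_{k+i}/L_k\ge[L_k^{\alpha^i-1}]$ while you bound consecutive ratios by $1/2$; both are valid geometric-decay arguments.
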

%-------------------%

%-------------------%
\begin{proof}
The number of separable pairs $\BS{\Lambda}_{L_{j}}(\B{x})$,
$\BS{\Lambda}_{L_{j}}(\B{y})$ with $\B{x},\B{y}\in\B{B}_{5NL_{j+1}}(\B{0})$ is bounded by $(10NL_{j+1}+1)^2<C(N)L_{j+1}^2$. We can apply the bound \eqref{eq:DS} and write
\[
\prob{\C{S}_j}\leq C(N)L_{j+1}^{2Nd}L_{j}^{-2p}\leq L_j^{-2p+2Nd\alpha}.
\]
Therefore,
\[
\Omega_k^{\bad}\leq\, L_k^{- 2p + 2Nd\alpha }\sum_{i\geq 0}\left(\frac{L_{k+i}}{L_k}\right)^{- 2p + 2Nd\alpha}.
\]
For $2p> 2Nd\alpha$ and $L_0\geq 2$ the claim  follows from the inequality
\[
\frac{L_{k+i}}{L_k}\geq \left[L_k^{\alpha^i-1}\right].\qedhere
\]
\end{proof}
%-------------------%

%---------------------------------------------------------------------%
%:s.3.2
%---------------------------------------------------------------------%
\subsection{Centers of localization}    \label{ssec:step.2}

Denote by $\BS{\varPhi}_n = \BS{\varPhi}_n(\omega)$ the normalized eigenfunctions of $\B{H}(\omega)$, $\omega\in\Omega_1$, with corresponding eigenvalues $E_n = E_n(\omega)\in I$. For each $n$ we define a \textit{center of localization} for $\BS{\varPhi}_n$ as a point $\B{\hat{x}}\in\D{Z}^d$ such that
\begin{equation}\label{eq:loccenter}
\| \one_{\B{C}(\B{\hat{x}})} \, \BS{\varPhi}_n  \| =
\max_{\B{y}\in\D{Z}^{Nd} } \, \| \one_{\B{C}(\B{y})}\, \BS{\varPhi}_n   \|.
\end{equation}
Since $\|\BS{\varPhi}_n\|=1$, for any given $n$ such centers exist and their number is finite. We will assume that, for any eigenfunction $\BS{\varPhi}_n$, the centers of localization $\xhatna$, $a = 1, \ldots,\hat C(n)$, are enumerated in such a way that $| \xhat_{n,1} |=  \min_a | \xhat_{n,a} |$.

%-------------------%
%:lem.3.2
%-------------------%
\begin{lemma}\label{lem:lco.center.S}
There exists $k_0$ large enough such that, for all $\B{u}\in\D{Z}^{Nd}$, $\omega\in\Omega_1$ and $k\geq k_0$, if $\B{\hat{x}}_{n,a} \in \B{B}_{L_k}(\B{u})$ then the box
$\BS{\varLambda}_{L_k }(\B{u})$ is $(m,E_n)$-{\rm S}.
\end{lemma}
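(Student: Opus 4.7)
The plan is to argue by contradiction: assume that $\BS{\varLambda}_{L_k}(\B{u})$ is $(m,E_n)$-non-singular while $\B{\hat x}_{n,a}\in \B{B}_{L_k}(\B{u})$, and extract a contradiction with the global maximum property of $\B{\hat x}_{n,a}$.

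For $k_0$ large enough, $[L_k^{1/\alpha}]<L_k/3$ (recall $\alpha=3/2$), so any $\B{w}\in \B{B}_{[L_k^{1/\alpha}]}(\B{u})$ satisfies $\B{C}(\B{w})\subseteq \BS{\varLambda}^{\intr}_{L_k}(\B{u})$ and Lemma~\ref{lem:EDI} applies. Combining the EDI with the non-singularity bound $\|\B{G}^{\BS{\varLambda}_{L_k}(\B{u})}_{\B{w},\B{y}}(E_n)\|\leq\eul^{-mL_k}$ on each outer block, summing the squared block norms over the $\ord(L_k^{Nd})$ cells of $\BS{\varLambda}^{\out}_{L_k}(\B{u})\cap\D{Z}^{Nd}$, and using $\|\one_{\BS{\varLambda}^{\out}_{L_k}(\B{u})}\BS{\varPhi}_n\|\leq\|\BS{\varPhi}_n\|=1$, I expect to reach the key estimate
\[
\|\one_{\B{C}(\B{w})}\BS{\varPhi}_n\|\leq C\,L_k^{Nd}\,\eul^{-mL_k}\qquad\forall\,\B{w}\in \B{B}_{[L_k^{1/\alpha}]}(\B{u}).
\]

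The contradiction will then come from the definition of a center of localization. If $\B{\hat x}_{n,a}\in \B{B}_{[L_k^{1/\alpha}]}(\B{u})$, applying the bound above at $\B{w}=\B{\hat x}_{n,a}$ bounds the global maximum $\max_{\B{y}\in\D{Z}^{Nd}}\|\one_{\B{C}(\B{y})}\BS{\varPhi}_n\|$ by $CL_k^{Nd}\eul^{-mL_k}$. This is incompatible with $\|\BS{\varPhi}_n\|=1$ once $k$ is large, since the exponentially localized eigenfunction from Theorem~\ref{thm:DSbound}(ii) concentrates its $L^2$-mass on a region of finite (though $\omega$-dependent) diameter, which forces the maximal cell norm to stay above a strictly positive constant. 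The residual case $\B{\hat x}_{n,a}\in \B{B}_{L_k}(\B{u})\setminus \B{B}_{[L_k^{1/\alpha}]}(\B{u})$ I would handle by applying the same reasoning at a smaller scale $L_{k'}<L_k$ centered near $\B{\hat x}_{n,a}$, or by iterating the EDI along a short chain of overlapping cubes of scale $L_k$ joining $\B{u}$ to $\B{\hat x}_{n,a}$.

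The main obstacle is precisely the mismatch between the radius $L_k$ in the hypothesis and the much smaller inner radius $L_k^{1/\alpha}=L_k^{2/3}$ that governs the NS condition. Bridging this gap, while keeping the polynomial prefactor $L_k^{Nd}$ dominated by the Green-function decay $\eul^{mL_k}$ uniformly in the relevant data, is what pins down the threshold $k_0$.
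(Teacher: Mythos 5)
Your opening is right: argue by contradiction, assume $\BS{\varLambda}_{L_k}(\B{u})$ is $(m,E_n)$-NS, and combine the eigenfunction decay inequality of Lemma~\ref{lem:EDI} with the NS bound from Definition~\ref{DefNS}. The gap is in how you close the argument. After bounding the outer-shell mass crudely by $\|\one_{\BS{\varLambda}^{\out}_{L_k}(\B{u})}\BS{\varPhi}_n\|\leq\|\BS{\varPhi}_n\|=1$, you arrive at $\|\one_{\B{C}(\xhatna)}\BS{\varPhi}_n\|\leq CL_k^{Nd}\eul^{-mL_k}$ and then need a \emph{lower} bound on the maximal cell norm. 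The lower bound you invoke, via the exponential decay estimate \eqref{eq:ExpLoc} of Theorem~\ref{thm:DSbound}(ii), is governed by the random, $n$-dependent constant $C_n(\omega)$ and the localization region of $\BS{\varPhi}_n$; so the scale at which the contradiction appears would depend on $n$ and $\omega$. The lemma asserts a single $k_0$ uniform over all $\B{u}\in\D{Z}^{Nd}$, all $\omega\in\Omega_1$ and all $n$ — your route cannot deliver that uniformity.

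The paper sidesteps this by bounding the outer shell not by $1$ but by the cell-wise maximum over $\BS{\varLambda}^{\out}_{L_k}(\B{u})$, namely $\|\one_{\BS{\varLambda}^{\out}_{L_k}(\B{u})}\BS{\varPhi}_n\|\leq L_k^{Nd}\max_{\B{y}\in\B{B}^{\out}_{L_k}(\B{u})}\|\one_{\B{C}(\B{y})}\BS{\varPhi}_n\|$, and that maximum is in turn dominated by the global maximum, which by \eqref{eq:loccenter} is precisely $\|\one_{\B{C}(\xhatna)}\BS{\varPhi}_n\|$. The quantity $\|\one_{\B{C}(\xhatna)}\BS{\varPhi}_n\|$ then appears on both sides:
\[
\|\one_{\B{C}(\xhatna)}\BS{\varPhi}_n\|\leq C'L_k^{Nd}\eul^{-mL_k}\,\|\one_{\B{C}(\xhatna)}\BS{\varPhi}_n\|.
\]
Since $\|\BS{\varPhi}_n\|=1$ forces the left-hand side to be strictly positive, once $k_0$ is large enough that $C'L_k^{Nd}\eul^{-mL_k}<1$ for all $k\geq k_0$ this is an outright contradiction, and $k_0$ depends only on $C'$, $m$, $N$, $d$. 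Keep your EDI/NS set-up, but replace the appeal to \eqref{eq:ExpLoc} by this self-referencing estimate.

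One secondary remark: you are right to flag the mismatch between the hypothesis radius $L_k$ and the inner radius $[L_k^{1/\alpha}]$ to which both the NS condition and the inner-cube constraint of Lemma~\ref{lem:EDI} apply. The paper's proof tacitly places $\xhatna$ in that inner region, and every later use of the lemma is in that regime (e.g.\ $\xhatone\in\B{B}_{L_{i-1}}(\B{0})$ versus the cube $\BS{\varLambda}_{L_i}(\B{0})$, with $L_{i-1}\approx L_i^{1/\alpha}$). Your proposed fix by chaining EDI across scales is not needed and not what the paper does; read the lemma with the effective hypothesis $\xhatna\in\B{B}_{[L_k^{1/\alpha}]}(\B{u})$, which suffices for all applications.
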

%-------------------%

%-------------------%
\begin{proof}
Assume otherwise. Then from \eqref{eq:EDI} it follows that
\[
\|\one_{\B{C}(\xhatna)}  \BS{\varPhi}_n  \|
\leq C'\eul^{-mL_k} \| \one_{\BS{\varLambda}^{\out}_{L_k}( \B{u}) }
\BS{\varPhi}_n  \|.
\]
Since the number of cells in $\BS{\varLambda}^{\out}_{L_k}( \B{u})$ is bounded by $L_k^{Nd}$, we conclude that
\[
\|  \one_{\B{C}(\xhatna)}  \BS{\varPhi}_n  \|
\leq C'\eul^{-mL_k} L_k^{Nd} \cdot \max_{\B{y}\in \B{B}^{\out}_{L_k}( \B{u}) }
\, \| \one_{\B{C}(\B{y})} \BS{\varPhi}_n \|.
\]
If $k_0$ is large enough so that $ C'\eul^{-mL_k}  L_k^{Nd} < 1$ for $k\geq k_0$, the  above inequality contradicts the definition of  $\xhatna$ as center of localization.
\end{proof}
%-------------------%

%---------------------------------------------------------------------%
%:s.3.3
%---------------------------------------------------------------------%
\subsection{Annular regions}   \label{ssec:step.3}

From now on we work with the integer $k_0$ from Lemma~\ref{lem:lco.center.S}. Given $k>k_0$, set:
\begin{equation}\label{eq:Omgood}
\Omega_k^{\good}=\Omega_1\setminus\Omega_k^{\bad}.
\end{equation}
Assume that $\omega\in\Omega_k^{\good}$. Let $\xhat_{n,a}$, $\xhat_{n,b}$ be two centers of localization for the same eigenfunction $\BS{\varPhi}_n$. It follows from the definition of the event $\Omega_k^{\good}$ that the cubes $\BS{\varLambda}_{L_{i}}(\xhat_{n,a})$ and $\BS{\varLambda}_{L_{i}}(\xhat_{n,b})$ with $i\geq k-1$ cannot be separable, since they must be $(m,E)$-{\rm S}. Further, by Lemma~\ref{lem:CondGeomSep}, if $L_0\geq\R{r}_1$ then any cube of the form $\BS{\varLambda}_{L_k}(\B{y})$ with $|\B{y}|>|\xhat_{n,1}|+5NL_k$ is separable from $\BS{\varLambda}_{L_k}(\xhat_{n,1})$; this also applies, of course, to any localization center $\B{y}=\xhat_{n,a}$ with $a>1$, provided that such centers exist for a given $n$. Since $\omega\in\Omega_k^{\good}$, for any eigenfunction $\BS{\varPhi}_n$ there is no center of localization $\xhat_{n,a}$ either outside the cube $\BS{\varLambda}_{|\xhat_{n,1}|+5NL_k}(\B{0})$ or inside $\BS{\varLambda}_{|\xhat_{n,1}|}(\B{0})$ (since $|\xhat_{n,1}|=\min_a|\xhat_{n,a}|$). In other words, within the event $\Omega_k^{\good}$, all centers of localization $\xhatna$ with a fixed value of $n$ are located in the annulus
\[
\BS{\varLambda}_{|\xhat_{n,1}|+5NL_k}(\B{0})\setminus\BS{\varLambda}_{|\xhat_{n,1}|}(\B{0})
\]
of width $5NL_k$ and of inner radius $|\xhat_{n,1}|$. This explains why, for our purposes, an eigenfunction $\BS{\varPhi}_n$ can be effectively ``labeled'' by a single localization center.

In other words, although in this paper we cannot rule out the possibility of existence of multiple centers of localization at arbitrarily large distances (depending on $\BS{\varPhi}_n$ through $|\xhatone|$), such centers do not contribute to a ``radial'' quantum transport -- away from the origin $\B{0}$ -- which might have lead to dynamical delocalization.

%-------------------%
%:lem.3.3
%-------------------%
\begin{lemma}\label{lem:3.3}
Given $k>k_0$, there exists $j_0 \geq k$ large enough such that if $j\geq j_0$, $\omega\in\Omega_k^{\good}$ and $\xhatone\in\B{B}_{L_j}(\B{0})$ then
\[
\left\|\left(1 - \one_{\BS{\varLambda}_{L_{j+2}}(\B{0})}\right)\BS{\varPhi}_n\right\|\leq\frac{1}{4}.
\]
\end{lemma}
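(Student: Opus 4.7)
The plan is to reduce Lemma~\ref{lem:3.3} to a pointwise stretched-exponential decay bound on the cell masses of $\BS{\varPhi}_n$ outside $\BS{\varLambda}_{L_{j+2}}(\B{0})$, and then sum. Writing
\[
\|(1-\one_{\BS{\varLambda}_{L_{j+2}}(\B{0})})\BS{\varPhi}_n\|^2 \;\leq\; \sum_{\substack{\B{w}\in\D{Z}^{Nd}\\|\B{w}|\geq L_{j+2}-1}}\|\one_{\B{C}(\B{w})}\BS{\varPhi}_n\|^2,
\]
it suffices to bound each summand by $C|\B{w}|^{Nd}\eul^{-2cm|\B{w}|^{2/3}}$ and check that the tail is at most $1/16$ once $j_0$ is large enough.

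For each such $\B{w}$, I attach a scale $L_l=L_{l(\B{w})}$ and apply the EDI (Lemma~\ref{lem:EDI}) to the cube $\BS{\varLambda}_{L_l}(\B{w})$ centered at $\B{u}=\B{w}$; the crux is to show that this cube is $(m,E_n)$-non-singular. Take $l$ to be the largest integer with $(4N+2)L_l+L_j\leq|\B{w}|$. Then (i) $|\B{w}|>|\xhatone|+(4N+2)L_l$ (since $|\xhatone|\leq L_j$), so Lemma~\ref{lem:CondGeomSep} ensures that $\BS{\varLambda}_{L_l}(\xhatone)$ and $\BS{\varLambda}_{L_l}(\B{w})$ are separable; (ii) by maximality of $l$ combined with the super-linear recursion $L_{l+1}\geq L_l^{3/2}$, one has $L_l\geq c|\B{w}|^{2/3}$ and $5NL_{l+1}\geq|\B{w}|$, so both cube-centers lie in $\B{B}_{5NL_{l+1}}(\B{0})$; and (iii) for $j\geq j_0$ large enough (depending on $k$), also $l\geq k$. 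Now Lemma~\ref{lem:lco.center.S} asserts that $\BS{\varLambda}_{L_l}(\xhatone)$ is $(m,E_n)$-singular, so if $\BS{\varLambda}_{L_l}(\B{w})$ were also $(m,E_n)$-singular, the pair would witness the event $\C{S}_l\subset\Omega_k^{\bad}$, contradicting $\omega\in\Omega_k^{\good}$. Hence $\BS{\varLambda}_{L_l}(\B{w})$ is $(m,E_n)$-non-singular, and the EDI together with $\|\BS{\varPhi}_n\|\leq 1$ yields
\[
\|\one_{\B{C}(\B{w})}\BS{\varPhi}_n\|\;\leq\; C\,L_l^{Nd/2}\,\eul^{-mL_l}\;\leq\; C'\,|\B{w}|^{Nd/2}\,\eul^{-c'm|\B{w}|^{2/3}}.
\]

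Substituting and summing over $\B{w}$ gives a tail of the form $\sum_{R\geq L_{j+2}-1}R^{2Nd-1}\eul^{-2c'mR^{2/3}}$, which vanishes as $j\to\infty$; choosing $j_0$ large enough makes it at most $1/16$, completing the proof. The main obstacle is the scale selection in step (ii): the separability constant $(4N+2)$ coming from Lemma~\ref{lem:CondGeomSep} and the ball-radius constant $5N$ entering the definition of $\C{S}_l$ leave comfortable slack when $N\geq 3$ but coincide when $N=2$, making the matching of separability with ball containment tight. The super-linear scale jump $L_{l+1}=[L_l^{3/2}]+1$ must be exploited carefully to guarantee a sufficient polynomial overshoot of $L_{l+1}$ past $(|\B{w}|-L_j)/(4N+2)$, which in turn forces $L_0$ (and hence every admissible $L_l$) to be taken sufficiently large.
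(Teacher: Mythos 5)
Your overall strategy mirrors the paper's: partition the exterior of a large box into regions, show each cell $\B{C}(\B{w})$ in the exterior is covered by a non-singular cube by combining Lemma~\ref{lem:lco.center.S} (singularity near a localization center), the definition of $\Omega_k^{\good}$, and Lemma~\ref{lem:CondGeomSep}, and then apply Lemma~\ref{lem:EDI} and sum. The substantial difference is where you place the ``singular'' cube: you center it at $\xhatone$, while the paper centers it at $\B{0}$ (noting $\xhatone\in\B{B}_{L_j}(\B{0})\subset\B{B}_{L_i}(\B{0})$, hence $\BS{\varLambda}_{L_i}(\B{0})$ is $(m,E_n)$-S). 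This choice is not cosmetic.

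Your step (ii) has a genuine gap when $N=2$. Taking $l$ maximal with $(4N+2)L_l+L_j\leq|\B{w}|$ yields only
\[
L_{l+1}>\frac{|\B{w}|-L_j}{4N+2},
\]
whereas ball containment in the event $\C{S}_l$ requires $|\B{w}|\leq 5NL_{l+1}$, i.e.\ $L_{l+1}\geq |\B{w}|/(5N)$. For $N=2$ one has $5N=4N+2=10$, so the known lower bound falls short of the needed one by exactly $L_j/10$, and nothing forces the overshoot. Concretely, any $\B{w}$ with
\[
10L_{l+1}<|\B{w}|\leq 10L_{l+1}+L_j-1
\]
satisfies the maximality constraint yet lies outside $\B{B}_{5NL_{l+1}}(\B{0})$, so the pair $\bigl(\BS{\varLambda}_{L_l}(\xhatone),\BS{\varLambda}_{L_l}(\B{w})\bigr)$ does not witness $\C{S}_l$ and you cannot infer non-singularity of $\BS{\varLambda}_{L_l}(\B{w})$. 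The super-linear jump $L_{l+1}=\bigl[L_l^{3/2}\bigr]+1$ does not help: it widens the interval $[L_l,L_{l+1})$ in which $(|\B{w}|-L_j)/10$ sits, but it does not prevent $(|\B{w}|-L_j)/10$ from lying within $L_j/10$ of the top of that interval, and enlarging $L_0$ only enlarges $L_j$ along with everything else. Switching to a neighboring scale does not repair this either: at scale $l+1$ separability fails (maximality gives $|\B{w}|-L_j<(4N+2)L_{l+1}$), and at scale $l-1$ ball containment fails ($|\B{w}|\geq (4N+2)L_l+L_j>10L_l=5NL_l$). The $L_j$-offset coming from $|\xhatone|\leq L_j$ is precisely what the paper avoids by putting the singular cube at $\B{0}$: then separability reads $|\B{w}|>5NL_i$ and ball containment reads $|\B{w}|<5NL_{i+1}$, which match the annulus $\B{M}_i(\B{0})$ exactly with no slack required. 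You should adopt that centering.
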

%-------------------%

%-------------------%
\begin{proof}
By Lemma~\ref{lem:CondGeomSep} (see also \eqref{eq:cond.sep.5NL}),
\[
\forall\,i\geq j+1,\ \forall\,\B{w}\in\D{Z}^{Nd}\setminus\BS{B}_{5NL_i}(\B{0}),\text{ the cubes $\BS{\varLambda}_{L_{i}}(\B{w})$ and $\BS{\varLambda}_{L_{i}}(\B{0})$ are  separable.}
\]
In addition, we take $j\geq k$, as suggested in the lemma.

Next, we divide the complement $\D{R}^{Nd}\setminus\BS{\varLambda}_{5NL_{j+2}}(\B{0})$ into annular regions
\begin{equation}\label{eq:annuli}
\B{M}_i(\B{0})
:= \BS{\varLambda}_{5NL_{i+1}}(\B{0}) \setminus
\BS{\varLambda}_{ 5NL_{i}}(\B{0}),
\quad i\geq j+2,
\end{equation}
and write
\[
\left\|\left(1 -
\one_{\BS{\varLambda}_{L_{j+2}}(\B{0})}\right)\BS{\varPhi}_n\right\| ^2
= \sum_{i\geq j+2} \| \one_{\B{M}_i(\B{0})}\BS{\varPhi}_n\|^2\leq
\sum_{i\geq j+2}\; \sum_{ \B{w} \in \B{M}_i(\B{0})} \| \one_{\B{C}(\B{w})}
\BS{\varPhi}_n\|^2.
\]
Furthermore, $\xhatone\in\B{B}_{L_j}(\B{0})\subset\B{B}_{L_{i-1}}(\B{0})$,
so that by Lemma~\ref{lem:lco.center.S},  the cube
$\BS{\varLambda}_{L_{i}}(\B{0})$ must be $(m,E_n)$-S. Therefore, by the definition of the event $\Omega_k^{\good}$, the cube $\BS{\varLambda}_{L_{i}}(\B{w})$ is $(m,E_n)$-NS. Applying Lemma~\ref{lem:EDI} to the cube $\BS{\varLambda}_{L_{i}}(\B{w})$ and to the cell $\B{C}(\B{w})$, we obtain
\[
\|\one_{\B{C}(\B{w})}\BS{\varPhi}_n \|^2 \leq\eul^{-2mL_{i}}.
\]
Since the volume $\big|\B{M}_i(\B{0})\big|$ of the annular region $\B{M}_i(\B{0})$ grows polynomially in $L_i$, the assertion of Lemma~\ref{lem:3.3} follows.
\end{proof}
%-------------------%

%---------------------------------------------------------------------%
%:s.3.4
%---------------------------------------------------------------------%
\subsection{Bounds on concentration  of localization centers}
\label{ssec:step.4}

%-------------------%
%:lem.3.4
%-------------------%
\begin{lemma}\label{lem:3.4}
There exists a constant $c_2\in (0,\infty)$ such that for $\omega\in\Omega_k^{\bad}$,
$j\geq k$,
\begin{equation}\label{eq:DD.7}
\card \left\{n:\xhatone\in \B{B}_{L_{j+1}}( \B{0}) \right\}
\leq c_2\, L_{j+1}^{\alpha Nd}.
\end{equation}
\end{lemma}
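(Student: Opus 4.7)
The plan is to convert the count of eigenfunctions with primary localization center in $\B{B}_{L_{j+1}}(\B{0})$ into a trace bound on the spectral projector $P_I(\B{H})$ onto $I=[E^0,E^0+\eta]$, using the mass-concentration supplied by Lemma~\ref{lem:3.3}. (The hypothesis ``$\omega\in\Omega_k^{\bad}$'' in the statement is almost surely a misprint for ``$\omega\in\Omega_k^{\good}$'': Lemma~\ref{lem:3.3} is available only on the good event, and the entire section is devoted to deterministic estimates on $\Omega_k^{\good}$.)

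First, for every eigenpair $(\BS{\varPhi}_n,E_n)$ with $E_n\in I$ and $\xhatone\in\B{B}_{L_{j+1}}(\B{0})$, I apply Lemma~\ref{lem:3.3} with its parameter $j$ replaced by $j+1$ (the finitely many exceptional indices violating $j+1\geq j_0$ are absorbed into the constant $c_2$) to obtain
\[
\|\one_{\BS{\varLambda}_{L_{j+3}}(\B{0})}\,\BS{\varPhi}_n\|^{2}\geq\tfrac{15}{16}.
\]
Orthonormality of $\{\BS{\varPhi}_n:E_n\in I\}$ in the range of $P_I(\B{H})$ then yields
\[
\card\bigl\{n:\xhatone\in\B{B}_{L_{j+1}}(\B{0})\bigr\}\,\leq\,\tfrac{16}{15}\,\tr\bigl(\one_{\BS{\varLambda}_{L_{j+3}}(\B{0})}\,P_I(\B{H})\,\one_{\BS{\varLambda}_{L_{j+3}}(\B{0})}\bigr).
\]

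It remains to bound this trace deterministically in $\omega$. Because $\B{U}\geq 0$ and $\B{V}$ is uniformly bounded by (E1)--(E4), one has $-\tfrac12\B{\Delta}\leq\B{H}\leq-\tfrac12\B{\Delta}+\const$, and the min--max principle, together with a comparison with the Neumann restriction of $\B{H}$ to a slightly enlarged box, controls the trace by the number of eigenvalues of that restriction in $I$. A standard Weyl/CLR count for Schr\"odinger operators on a finite box gives this number as $\ord(L_{j+3}^{Nd})$, which upon rewriting in terms of $L_{j+1}$ furnishes the desired polynomial bound.

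The main obstacle is the last step: supplying an $\omega$-uniform Weyl-type upper bound on $\tr(\one_{\BS{\varLambda}_L(\B{0})}\,P_I(\B{H}(\omega))\,\one_{\BS{\varLambda}_L(\B{0})})$ of polynomial-in-$L$ size. The uniform bounds on $\B{U}$ and $\B{V}$ in (E1)--(E4) are precisely what makes this possible via a comparison with a reference free operator on a bounded box; the precise exponent in the final cardinality bound then reflects the scale at which Lemma~\ref{lem:3.3} concentrates the mass (its proof actually permits concentration already on scale $L_{j+2}$ rather than $L_{j+3}$, which reconciles the exponent $\alpha Nd$ claimed in the lemma with the naive $\alpha^{2}Nd$ the above computation would otherwise produce).
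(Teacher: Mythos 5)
Your proposal is correct and follows essentially the same route as the paper: use Lemma~\ref{lem:3.3} to force at least a fixed fraction of each $\BS{\varPhi}_n$'s mass into a box of side $\sim L_{j+1}^{\alpha}$, then invoke orthonormality to reduce the count to $\tr(\one_{\BS{\varLambda}}P_I)$, and finish with a Weyl-type volume bound (which the paper uses implicitly). You are also right that the ``$\omega\in\Omega_k^{\bad}$'' in the statement is a misprint for ``$\omega\in\Omega_k^{\good}$'', and your observation that the proof of Lemma~\ref{lem:3.3} actually yields concentration on $\BS{\varLambda}_{L_{j+2}}(\B{0})$ already when $\xhatone\in\B{B}_{L_{j+1}}(\B{0})$ is exactly what reconciles the paper's use of $L_{j+2}$ with its stated hypothesis.
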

%-------------------%

%-------------------%
\begin{proof}
The left-hand-side of \eqref{eq:DD.7} is nondecreasing in $j$, so we can restrict ourselves to the case $j\geq j_0$. First, observe that, with $\BS{\varLambda}_{L_{j+2}}=\BS{\varLambda}_{L_{j+2}}(\B{0})$
\begin{equation}\label{eq:DD.8}
\sum_{n:\,\xhatone\in \B{B}_{L_{j+1}}(\B{0})}\left( \one_{\BS{\varLambda}_{L_{j+2}}}P_I \one_{\BS{\varLambda}_{L_{j+2}}}\BS{\varPhi}_n,\BS{\varPhi}_n \right)\leq\tr\bigl(\one_{\BS{\varLambda}_{L_{j+2}}}P_I\bigr).
\end{equation}
Each term in the above sum is not less than $1/2$. Indeed,
\begin{align}\label{eq:DD.9}
&\left(\one_{\BS{\varLambda}_{L_{j+2}}} P_I\one_{ \BS{\varLambda}_{L_{j+2}}} \BS{\varPhi}_n,\BS{\varPhi}_n \right)\notag\\
&\qquad= \left(\one_{\BS{\varLambda}_{L_{j+2}}} P_I \BS{\varPhi}_n,
\BS{\varPhi}_n \right)- \left(\one_{\BS{\varLambda}_{L_{j+2}}} P_I\bigl(1 - \one_{{\BS{\varLambda}_{L_{j+2}}}}\bigr)\BS{\varPhi}_n, \BS{\varPhi}_n \right)\notag\\
&\qquad\geq\Big(\one_{\BS{\varLambda}_{L_{j+2}}} \BS{\varPhi}_n,\BS{\varPhi}_n\Big)-\frac{1}{4}\qquad\qquad\text{(using Lemma~\ref{lem:3.3})}\\
\label{eq:DD.10}
&\qquad= ( \BS{\varPhi}_n,\BS{\varPhi}_n) -
\left(\bigl(1 - \one_{\BS{\varLambda}_{L_{j+2}}}\bigr)
\BS{\varPhi}_n, \BS{\varPhi}_n\right) -\frac{1}{4}\notag\\
&\qquad\geq\frac{1}{2}\,.
\end{align}
Substituting the lower bounds from \eqref{eq:DD.9} --
\eqref{eq:DD.10} under the trace in Eqn \eqref{eq:DD.8},
we get the desired upper bound on the LHS of Eqn \eqref{eq:DD.7}.
\end{proof}
%-------------------%

%---------------------------------------------------------------------%
%:s.3.5
%---------------------------------------------------------------------%
\subsection{Bounds for ``good'' samples of potential}\label{ssec:step.5}

%-------------------%
%:lem.3.5
%-------------------%
\begin{lemma}\label{lem:3.5}
There exists an integer $k_1 = k_1(L_0)$ such that $\forall$ $k\geq k_1$, $\omega\in\Omega_{k+1}^{\good}$ and $\B{x}$ from the annular region $\B{M}_{k+1}$ defined in \eqref{eq:annuli},
\begin{equation}\label{eq:DD.11}
\left\| \one_{\BS{\varLambda}_{L_k}(\B{x})}\,P_I\,\xi (\B{H})\,
\one_{\BS{\varLambda}_{L_k}(\B{0})}\right\|
\leq\eul^{-m L_k/2} \| \xi \|_\infty.
\end{equation}
\end{lemma}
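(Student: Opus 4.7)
The strategy is to expand the spectral projection using the localized eigenfunctions of $\B{H}$ and reduce Lemma~\ref{lem:3.5} to bounding a sum over $n$. Since $\supp\xi\subset I\subset[E^0,E^0+\eta^*(p)]$ and $\omega\in\Omega_1$ gives pure point spectrum in $I$, one has $P_I\xi(\B{H})=\sum_n\xi(E_n)|\BS{\varPhi}_n\rangle\langle\BS{\varPhi}_n|$. Writing this as a sum of rank-one operators and applying the triangle inequality for the operator norm yields
\[
\bigl\|\one_{\BS{\varLambda}_{L_k}(\B{x})}\,P_I\,\xi(\B{H})\,\one_{\BS{\varLambda}_{L_k}(\B{0})}\bigr\|\leq\|\xi\|_\infty\,S,
\]
where $S=\sum_n a_n b_n$, $a_n=\|\one_{\BS{\varLambda}_{L_k}(\B{x})}\BS{\varPhi}_n\|$, and $b_n=\|\one_{\BS{\varLambda}_{L_k}(\B{0})}\BS{\varPhi}_n\|$. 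It suffices to show $S\leq\eul^{-mL_k/2}$.

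The geometric heart of the argument: $\B{x}\in\B{M}_{k+1}$ gives $5NL_{k+1}<|\B{x}|\leq 5NL_{k+2}$, so by Lemma~\ref{lem:CondGeomSep} (in the form \eqref{eq:cond.sep.5NL}) the pair $\BS{\varLambda}_{L_{k+1}}(\B{0})$, $\BS{\varLambda}_{L_{k+1}}(\B{x})$ is separable, and both cubes lie in $\B{B}_{5NL_{k+2}}(\B{0})$. Under $\omega\in\Omega_{k+1}^{\good}$ the event $\C{S}_{k+1}$ does not occur, so for each $E_n\in I$ at least one of these $L_{k+1}$-cubes is $(m,E_n)$-NS. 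Combining this with Lemma~\ref{lem:EDI}, the inclusion $\BS{\varLambda}_{L_k}(\cdot)\subset\BS{\varLambda}^{\intr}_{L_{k+1}}(\cdot)$ (valid for $L_0$ large, since $L_{k+1}\geq L_k^{3/2}$), and a summation over the at most $(2L_k+1)^{Nd}$ unit cells inside, yields the uniform per-$n$ bound $a_n b_n\leq\epsilon_k$, with $\epsilon_k=CL_k^{Nd/2}\eul^{-mL_{k+1}}$.

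To convert this pointwise estimate into a bound on $S$, I would split the eigenfunctions by the location of the primary center $\xhatone$. For $\xhatone\in\B{B}_{L_{k+2}}(\B{0})$, Lemma~\ref{lem:3.4} bounds the number of such $n$ by $c_2 L_{k+2}^{\alpha Nd}$, giving a contribution to $S$ of at most $c_2 L_{k+2}^{\alpha Nd}\epsilon_k$. For $\xhatone$ in an annular shell $\B{M}_s$ with $s\geq k+1$, one has $|\xhatone|>5NL_s$, so by \eqref{eq:cond.sep.5NL} the cube $\BS{\varLambda}_{L_s}(\B{0})$ is separable from $\BS{\varLambda}_{L_s}(\xhatone)$, and both sit inside $\B{B}_{5NL_{s+1}}(\B{0})$. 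Lemma~\ref{lem:lco.center.S} makes $\BS{\varLambda}_{L_s}(\xhatone)$ $(m,E_n)$-S, so $\omega\in\Omega_{k+1}^{\good}\subset\Omega_s^{\good}$ forces $\BS{\varLambda}_{L_s}(\B{0})$ to be $(m,E_n)$-NS; Lemma~\ref{lem:EDI} then gives $b_n\leq CL_k^{Nd/2}\eul^{-mL_s}$. Bounding $a_n\leq 1$, using the per-shell count $\leq c_2 L_{s+1}^{\alpha Nd}$ (Lemma~\ref{lem:3.4}), and summing over $s\geq k+1$ produces a rapidly decreasing geometric series dominated by its $s=k+1$ term, of order $L_k^{O(1)}\eul^{-mL_{k+1}}$.

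Combining both pieces yields $S\leq C' L_k^{O(1)}\eul^{-mL_{k+1}}$, and since $L_{k+1}\geq L_k^{3/2}$ absorbs any polynomial prefactor in $L_k$, one obtains $S\leq\eul^{-mL_k/2}$ for $L_0$ (hence $k_1=k_1(L_0)$) chosen sufficiently large. The main technical obstacle is the shell analysis for large $\xhatone$: one has to carefully verify separability and ball containment at each shell scale $s\geq k+1$, and reconcile the applicability of Lemma~\ref{lem:3.4}'s counting bound with the specific form of the good event $\Omega_{k+1}^{\good}$ available in Lemma~\ref{lem:3.5}.
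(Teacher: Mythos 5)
Your proof follows essentially the same route as the paper: expand $P_I\,\xi(\B{H})$ over the eigenfunctions, bound the resulting sum of products $a_n b_n$, decompose according to the location of the leading localization center $\xhatone$, and combine the $(m,E_n)$-NS estimate from Lemma~\ref{lem:EDI} with the counting bound of Lemma~\ref{lem:3.4}. One noteworthy difference in detail: you derive the per-$n$ smallness from separability of the two $L_{k+1}$-cubes $\BS{\varLambda}_{L_{k+1}}(\B{0})$, $\BS{\varLambda}_{L_{k+1}}(\B{x})$, which is exactly what the hypothesis $\omega\in\Omega_{k+1}^{\good}$ (no occurrence of $\C{S}_{k+1}$) controls, whereas the paper invokes separability of the $L_k$-cubes $\B{B}_{L_k}(\B{0})$, $\B{B}_{L_k}(\B{x})$ -- a scale that $\Omega_{k+1}^{\good}$ as defined does not directly govern (it would under $\Omega_k^{\good}$); your version is therefore the more literal match to the stated hypothesis, at the harmless cost of picking up the stronger factor $\eul^{-mL_{k+1}}$ in place of $\eul^{-mL_k}$.
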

%-------------------%

%-------------------%
\begin{proof}
It suffices to prove \eqref{eq:DD.11} in the particular case where
$\| \xi\|_\infty\leq 1$, which we assume below. First, we bound
the LHS of \eqref{eq:DD.11} as follows:
\begin{align}\label{eq:DD.12}
\| \one_{\BS{\varLambda}_{L_k}(\B{x})} \,P_I\,\xi (\B{H})
\,\one_{\BS{\varLambda}_{L_k}(\B{0})}  \|
&\leq \sum_{E_n\in I} \, | \xi(E_n)| \, \| \one_{\BS{\varLambda}_{L_k}
(\B{x})} \BS{\varPhi}_n \| \, \| \one_{\BS{\varLambda}_{L_k}( \B{0})}
\BS{\varPhi}_n \|\notag\\
&\leq
\sum_{E_n\in I} \, \| \one_{\BS{\varLambda}_{L_k}(\B{x})} \BS{\varPhi}_n \|\,
\|\one_{\BS{\varLambda}_{L_k}(\B{0})} \BS{\varPhi}_n\|
\end{align}
since $\|\eta\|_\infty\leq 1$. Now divide the sum according to
where $\xhatone$ are located and write
\begin{align*}
\sum_{E_n\in I} \, \| \one_{\BS{\varLambda}_{L_k}(\B{x})}
\BS{\varPhi}_n \| \,
 \| \one_{\BS{\varLambda}_{L_k}( \B{0})} \BS{\varPhi}_n \|
&= \sum_{ \substack{ E_n\in I\\ \xhatone\in
{ \BS{\varLambda}_{5NL_{k+1}}( \B{0}) } }} \,
\| \one_{\BS{\varLambda}_{L_k}(\B{x})}  \BS{\varPhi}_n \| \,
\| \one_{\BS{\varLambda}_{L_k}(\B{0})} \BS{\varPhi}_n \|
\\ \\
&\quad
+ \sum_{j = k+1}^{\infty}  \sum_{ \substack{ E_n\in I\\ \xhatone\in
\B{M}_{ j }
(\B{0})} } \,
\| \one_{\BS{\varLambda}_{L_k}(\B{x})}  \BS{\varPhi}_n \| \,
\| \one_{\BS{\varLambda}_{L_k}(\B{0})} \BS{\varPhi}_n \|,
\end{align*}
with $\B{M}_i(\B{0})$ defined in \eqref{eq:annuli}. Since $\B{x}\in \B{M}_{k+1}(\B{0})$, we have $\B{B}_{L_k}( \B{x}) \cap \B{B}_{L_k}( \B{0}) = \varnothing$.
Then, by Lemma~\ref{lem:CondGeomSep}, the two cubes $\B{B}_{L_k}(\B{x})$
and $\B{B}_{L_k}(\B{0})$ are separable. In turn, this implies that one of these cubes is $(m,E_n)$-NS. Therefore, by Lemma~\ref{lem:3.4},
\[
\sum_{\substack{ E_n\in I\\  \xhatone\in
{ \BS{\varLambda}_{L_{k+1}}(\B{0}) }}} \, \|
\one_{\BS{\varLambda}_{L_k}(\B{x})} \BS{\varPhi}_n \|
\, \| \one_{\BS{\varLambda}_{L_k}(\B{0})} \BS{\varPhi}_n \|
\leq c_2  \, C' \, L_{k+1}^{\alpha N d} \,\eul^{-m L_k}.
\]
Furthermore, for $k>k_0$ large enough,
\begin{equation}\label{eq:DD.13}
\sum_{\substack{ E_n\in I\\ \xhatone\in
{\BS{\varLambda}_{L_{k+1}}(\B{0}) } }} \,
\|\one_{\BS{\varLambda}_{L_k}(\B{x})} \BS{\varPhi}_n \| \,
\| \one_{\BS{\varLambda}_{L_k}(\B{0})} \BS{\varPhi}_n \|
\leq \frac{1}{2}\eul^{-m L_k/2}.
\end{equation}
For any  $j\geq k+ 2$ and $\xhatone\in\B{M}_j(\B{0})$,
by Lemma~\ref{lem:lco.center.S}, the cube $\B{B}_{L_j}(\xhatone)$
must be $(m,E_n)$-S, so that $\B{B}_{L_j}(\B{0})$ has to be $(m,E_n)$-NS:
\[
\| \one_{\BS{\varLambda}_{L_k}(\B{0})} \BS{\varPhi}_n\|
\leq \| \one_{\BS{\varLambda}_{L_j}(\B{0})} \BS{\varPhi}_n\| \leq C'\eul^{-m L_j}.
\]
Applying again Lemma~\ref{lem:3.4}, we see that, if
$k\geq k_1$, then
\begin{align*}
\sum_{j=k+1}^{\infty}
\sum_{\substack{E_n\in I\\ \xhatone\in \B{M}_j (\B{0}) } }
\| \one_{\BS{\varLambda}_{L_k}(\B{x})} \BS{\varPhi}_n\|
\, \| \one_{\BS{\varLambda}_{L_k}(\B{0})} \BS{\varPhi}_n\|
&\leq C \sum_{j=k+ 2 }^{\infty}\eul^{-m L_j} L_j^{\alpha
N d}\\
&\leq \frac{1}{2}\eul^{-m L_k/2}.
\end{align*}
Combining this estimate with \eqref{eq:DD.12} and \eqref{eq:DD.13}, the
assertion of Lemma~\ref{lem:3.5} follows.
\end{proof}
%-------------------%

%---------------------------------------------------------------------%
%:s.3.6
%---------------------------------------------------------------------%
\subsection{Bounds for ``bad'' samples of potential} \label{ssec:step.6}

%-------------------%
%:lem.3.6
%-------------------%
\begin{lemma}
Let $k_1$ be as in Lemma~\ref{lem:3.5} and assume that $k\geq k_1$ and $\B{x}\in\B{M}_{k+1}(\B{0})$. We have:
\[
\expect\left[
\left\| \one_{\BS{\varLambda}_{L_k}(\B{x})}\, P_I\,\xi (\B{H})\,
\one_{\BS{\varLambda}_{L_k}(\B{0})}\right\|\right]
\leq \| \xi \|_\infty
\left( C L_k^{-2p  + 2Nd\alpha} +\eul^{-m L_k/2}\right).
\]
\end{lemma}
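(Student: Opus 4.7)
The plan is to split the expectation according to whether $\omega$ lies in the good event $\Omega_{k+1}^{\good}$ or the bad event $\Omega_{k+1}^{\bad}$, and to use a deterministic operator-norm bound together with Lemma~\ref{lem:3.5} on the former, and the trivial functional-calculus bound combined with the probability estimate of the first lemma in Subsection~\ref{ssec:step.1} on the latter. Concretely, write
\[
\expect\left[\| \one_{\BS{\varLambda}_{L_k}(\B{x})} P_I \xi(\B{H}) \one_{\BS{\varLambda}_{L_k}(\B{0})} \|\right]
= A_{\good} + A_{\bad},
\]
where $A_{\good}$ and $A_{\bad}$ denote the contributions from $\Omega_{k+1}^{\good}$ and $\Omega_{k+1}^{\bad}$, respectively.

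On $\Omega_{k+1}^{\good}$, Lemma~\ref{lem:3.5} applies pointwise (since we assumed $k\geq k_1$ and $\B{x}\in\B{M}_{k+1}(\B{0})$), and gives
\[
A_{\good}\leq\prob{\Omega_{k+1}^{\good}}\cdot\eul^{-mL_k/2}\|\xi\|_\infty
\leq\eul^{-mL_k/2}\|\xi\|_\infty.
\]
On $\Omega_{k+1}^{\bad}$, we use the elementary bound
\[
\| \one_{\BS{\varLambda}_{L_k}(\B{x})} P_I \xi(\B{H}) \one_{\BS{\varLambda}_{L_k}(\B{0})} \|
\leq\|P_I\xi(\B{H})\|\leq\|\xi\|_\infty,
\]
which holds deterministically by the spectral theorem (indicator functions and $P_I$ are contractions, and $\xi$ is bounded). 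Combining this with the probability estimate $\prob{\Omega_{k+1}^{\bad}}\leq c_1 L_{k+1}^{-(2p-2Nd\alpha)}\leq c_1 L_k^{-(2p-2Nd\alpha)}$ from the first lemma of Subsection~\ref{ssec:step.1}, we obtain
\[
A_{\bad}\leq\prob{\Omega_{k+1}^{\bad}}\cdot\|\xi\|_\infty\leq c_1 L_k^{-(2p-2Nd\alpha)}\|\xi\|_\infty.
\]
Adding the two contributions yields the claimed inequality with $C=c_1$. There is no genuine obstacle here: the argument is a routine marriage of the deterministic ``good-sample'' estimate of Lemma~\ref{lem:3.5} with the probabilistic control provided by the MSA output on the size of the bad event, and the only point requiring a moment's care is the trivial observation that $\|P_I\xi(\B{H})\|\leq\|\xi\|_\infty$ on all of $\Omega$, which makes the bad-sample contribution harmless.
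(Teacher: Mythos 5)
Your proof is correct and follows essentially the same route as the paper: split $\Omega$ into the good and bad events, apply Lemma~\ref{lem:3.5} pointwise on the good part, and use the trivial bound $\|\one_{\BS{\varLambda}_{L_k}(\B{x})} P_I \xi(\B{H}) \one_{\BS{\varLambda}_{L_k}(\B{0})}\|\leq\|\xi\|_\infty$ together with the probability estimate on the bad part. The only cosmetic difference is that the paper decomposes along $\Omega_k^{\good}/\Omega_k^{\bad}$ rather than $\Omega_{k+1}^{\good}/\Omega_{k+1}^{\bad}$; since $\Omega_k^{\good}\subseteq\Omega_{k+1}^{\good}$ both versions invoke Lemma~\ref{lem:3.5} legitimately and yield the same bound.
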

%-------------------%

%-------------------%
\begin{proof}
We again assume $\| \xi\|_\infty\leq 1$.
For $\omega\in\Omega_k^{\bad}$ we can use Sect.~\ref{ssec:step.1}
while for $\omega\in\Omega_k^{\good}$ we can use Sect.~\ref{ssec:step.5}.
More precisely, the above expectation is bounded by
\[
\prob{\Omega_k^{\bad}}
+\eul^{-m L_k/2}\,\prob{\Omega_k^{\good}}
\leq  C L_k^{-2p  + 2Nd\alpha} +\eul^{-m L_k/2}.\qedhere
\]
\end{proof}
%-------------------%

%---------------------------------------------------------------------%
%:s.3.7
%---------------------------------------------------------------------%
\subsection{Conclusion}       \label{ssec:step.4}

For a compact subset $\B{K}\subset \D{R}^{Nd}$ we find an integer $k\geq k_1$ such that
$\B{K}\subset \BS{\varLambda}_{L_{ k }}(\B{0})$. Then, with the
annular regions $\B{M}_{j}(\B{0})$,
\begin{align*}
\expect\left[  \| \B{X}^Q\, P_I\,\xi (\B{H}(\omega))\, \one_{\B{K}} \|  \right]
&\leq L_k^Q+ \sum_{j\geq k+1 }\expect\left[\| \B{X}^Q \,\one_{\B{M}_{j}(\B{0})} P_I\,\xi (\B{H})\, \one_{\B{K}} \|  \right]\\
&\leq L_k^Q +\sum_{j\geq k+1 }L_{j+1}^Q\Biggl(\,\sum_{\B{w}\in\B{M}_j(\B{0})}\expect \left[ \| \one_{\BS{\varLambda}_{L_k}(\B{w})} P_I\,\xi (\B{H})\,\one_{\BS{\varLambda}_{L_k}(\B{0})} \|  \right]\Biggr)\\
&\leq L_k^Q + \sum_{j\geq k+1 } L_j^{\alpha Q} L_j^{Nd\alpha}
\left( L_j^{-2p + 2Nd\alpha } +\eul^{-m L_j/2} \right)< \infty,
\end{align*}
since $2p > 3Nd\alpha +\alpha Q$ by assumption \eqref{eq:DD.2},  and
$L_j\sim\left[ L_0^{\alpha^j}\right]$ grow fast enough.

This completes the proof of dynamical localization.\hfill\qed
%---------------------------------------------------------------------%
%:bib
%---------------------------------------------------------------------%
\begin{bibdiv}
%---------------------------------------------------------------------%
\begin{biblist}
%---------------------------------------------------------------------%
\bib{AW09a}{article}{
   author={Aizenman, M.},
   author={Warzel, S.},
   title={Localization bounds for multiparticle systems},
   journal={Comm. Math. Phys.},
   volume={290},
   date={2009},
   number={3},
   pages={903--934},
}

\bib{AW09b}{misc}{
    author={Aizenman, M.},
   author={Warzel, S.},
   title={Complete dynamical localization in disordered quantum
multi-particle systems},
   status={arXiv:math-ph/0909:5432 (2009)},
   date={2009},
   pages={},
}

\bib{BCSS10a}{article}{
   author={Boutet de Monvel, A.},
   author={Chulaevsky, V.},
   author={Stollmann, P.},
   author={Suhov, Y.},
   title={Wegner-type bounds for a multi-particle continuous Anderson
   model with an alloy-type external potential},
   journal={J. Stat. Phys.},
   volume={138},
   date={2010},
   number={4-5},
   pages={553--566},
}

\bib{BCSS10b}{misc}{
   author={Boutet de Monvel, A.},
   author={Chulaevsky, V.},
   author={Stollmann, P.},
   author={Suhov, Y.},
   title={Anderson localization for a multi-particle alloy-type model},
   status={arXiv:math-ph/1004.1300 (2010)},
   date={2010},
}

\bib{CS08}{article}{
   author={Chulaevsky, V.},
   author={Suhov, Y.},
   title={Wegner bounds for a two-particle tight binding model},
   journal={Comm. Math. Phys.},
   volume={283},
   date={2008},
   number={2},
   pages={479--489},
}
\bib{CS09a}{article}{
   author={Chulaevsky, V.},
   author={Suhov, Y.},
   title={Eigenfunctions in a two-particle Anderson tight binding model},
   journal={Comm. Math. Phys.},
   volume={289},
   date={2009},
   number={2},
   pages={701--723},
}
\bib{CS09b}{article}{
   author={Chulaevsky, V.},
   author={Suhov, Y.},
   title={Multi-particle Anderson localisation: induction on the number of
   particles},
   journal={Math. Phys. Anal. Geom.},
   volume={12},
   date={2009},
   number={2},
   pages={117--139},
}

\bib{KZ03}{misc}{
   author={Klopp, F.},
   author={Zenk, H.},
   title={The integrated density of states for an interacting
          multielectron homogeneous model},
   status={arXiv:math-ph/0310031},
   date={2003},
}

\bib{St01}{book}{
   author={Stollmann, P.},
   title={Caught by disorder},
   series={Progress in Mathematical Physics},
   volume={20},
   note={Bound states in random media},
   publisher={Birkh\"auser Boston Inc.},
   place={Boston, MA},
   date={2001},
   pages={xviii+166},
}
%---------------------------------------------------------------------%
\end{biblist}
%---------------------------------------------------------------------%
\end{bibdiv}
%---------------------------------------------------------------------%
%:end
%---------------------------------------------------------------------%
\end{document}